\newcommand{\OPT}{\operatorname{OPT}}
\newcommand{\Scal}{\mathcal{S}}
\newcommand{\T}{\mathcal{T}}
\newcommand{\diam}{\operatorname{diam}}
\newcommand{\C}{\mathcal{C}}
\newcommand{\class}{\operatorname{class}}
\newcommand{\ALG}{\operatorname{ALG}}
\newcommand{\vF}{\hat{F}}
\newcommand{\vassign}{\hat{\sigma}}
\newcommand{\Lead}{\operatorname{Leader}}
\newcommand{\Leads}{\operatorname{Leaders}}
\newcommand{\OFL}{\operatorname{OFL-ALG}}
\newcommand{\Rmax}{R_{\operatorname{max}}}
\newcommand{\SOL}{\operatorname{SOL}}
\newcommand{\D}{\mathcal{D}}
\newcommand{\OPTFL}{\OPT_{\operatorname{FL}}}
\newcommand{\OPTROB}{\OPT_{\operatorname{ROB}}}
\newcommand{\OPTSF}{\OPT_{\operatorname{SF}}}
\newcommand{\SF}{\operatorname{SF-ALG}}
\newtheorem{theorem}{Theorem}[section]
\newtheorem{lemma}[theorem]{Lemma}
\newtheorem{corollary}[theorem]{Corollary}
\newtheorem{claim}{Claim}[section]
\newtheorem{definition}{Definition}
\begin{document}

\title{\Large Online Network Design Algorithms 
via
 Hierarchical Decompositions\thanks{This work was done while the
   author was visiting Microsoft Research and the University of
   Washington. Supported in part by NSF Award CCF-1320854.}}
\author{Seeun Umboh\thanks{University of
  Wisconsin -- Madison. \tt{seeun@cs.wisc.edu}.}}
\date{}

\maketitle 

\begin{abstract}
We develop a new approach for online network design and obtain
improved competitive ratios for several problems. Our approach gives
natural deterministic algorithms and simple analyses. At the heart of
our work is a novel application of embeddings into hierarchically
well-separated trees (HSTs) to the analysis of online network design
algorithms --- we charge the cost of the algorithm to the cost of the
optimal solution on any HST embedding of the terminals. This analysis
technique is widely applicable to many problems and gives a unified
framework for online network design.

In a sense, our work brings together two of the main approaches to
online network design. The first uses greedy-like algorithms and
analyzes them using dual-fitting. The second uses tree embeddings ---
embed the entire graph into a tree at the beginning and then
solve the problem on the tree --- and results in randomized $O(\log
n)$-competitive algorithms, where $n$ is the total number of vertices
in the graph. Our approach uses deterministic greedy-like algorithms
but analyzes them via HST embeddings of the terminals. Our proofs are
simpler as we do not need to carefully construct dual solutions and we
get $O(\log k)$ competitive ratios, where $k$ is the number of
terminals.

In this paper, we apply our approach to obtain deterministic $O(\log
k)$-competitive online algorithms for the following problems.
\begin{enumerate}
\item Steiner network with edge duplication. Previously, only a
  randomized $O(\log n)$-competitive algorithm was known.
\item Rent-or-buy. Previously, only deterministic $O(\log^2
  k)$-competitive and randomized $O(\log k)$-competitive algorithms by
  Awerbuch, Azar and Bartal (\emph{Theoretical Computer Science} 2004)
  were known.
\item Connected facility location. Previously, only a
  randomized $O(\log^2 k)$-competitive algorithm of San Felice,
  Williamson and Lee (\emph{LATIN} 2014) was known.
\item Prize-collecting Steiner forest. We match the competitive ratio
  first achieved by Qian and Williamson (\emph{ICALP} 2011) and give a
  simpler analysis.
\end{enumerate}

Our competitive ratios are optimal up to constant factors as these
problems capture the online Steiner tree problem which has a lower
bound of $\Omega(\log k)$.

\end{abstract}

\section{Introduction}
\label{sec:intro}
We study network design problems in the online model. In a network
design problem, we are given a graph with edge costs and connectivity
requirements. The goal is to find a minimum-cost subgraph satisfying
the requirements. In the online model, requests in the form of
terminals and connectivity requirements arrive one-by-one, and the
goal is to maintain a minimum-cost subgraph satisfying all arrived
requirements. The challenge in the online model is that decisions are
irrevocable: once an edge is added to the subgraph, it cannot be
removed later on. For instance, the classic Steiner tree problem asks
for a minimum-cost subgraph connecting a given set of terminals. In
the online Steiner tree problem~\cite{doi:10.1137/0404033}, terminals
arrive one-by-one and the algorithm maintains a subgraph connecting
all terminals so far. In the following, we use $n$ to denote the
number of vertices in the underlying graph, and $k$ the number of
terminals.



Our main contribution is a novel application of embeddings into
hierarchically well-separated trees (HSTs) \cite{Bartal:1996hg,
  Bartal:1998, Fakcharoenphol:2004cg} as a tool for analyzing online
network design algorithms.
In this paper, we apply this technique to several problems and obtain
natural algorithms that improve upon previous work. Our approach also
gives a simple unified analysis across these different problems.
%
%

At a high level, our work brings together two disparate lines of work
in online network design.  The first designs greedy-like algorithms
and analyzes them via dual-fitting. The idea is to grow a dual
solution in tandem with the algorithm such that at any step, the dual
accounts for the cost of the algorithm, mirroring the primal-dual
schema in the offline setting. Then one shows that the dual is
feasible after appropriate scaling. Berman and
Coulston~\cite{Berman:1997:OAS:258533.258618} used this approach to
give a $O(\log k)$-competitive algorithm for online Steiner forest
which was later extended by Qian and Williamson~\cite{Qian:2011wg} to
the more general online constrained forest and prize-collecting
Steiner forest problems. 
 
The other line of work is based on tree embeddings.
Awerbuch and Azar~\cite{646143} observed that for many online network
design problems, one can first probabilistically embed the entire
input graph into a tree \cite{Bartal:1996hg, Bartal:1998,
  Fakcharoenphol:2004cg} before the requests arrive and then
(essentially) solve the online problem on the tree. In the analysis,
one shows that for any choice of tree embedding, the cost of the
resulting solution can be charged to the cost of the optimal solution
on the tree. For problems that are easily solved on trees, this
results in a randomized algorithm whose competitive ratio is, in
expectation over the probabilistic embedding, at most $O(\log n)$.

We combine these two approaches in the following way: we use
greedy-like algorithms but analyze them using HST embeddings. In
particular, we develop a charging scheme showing that for \emph{any}
expanding embedding of the terminals into a HST $T$, the cost of the
algorithm is at most a constant times the cost of the optimal solution
on the tree $\OPT(T)$. We emphasize that the embedding only exists in
the analysis, not in the algorithm. This shows that the competitive
ratio of the algorithm is bounded by
\begin{equation}
\label{eq:cr}
O(1)\cdot \min_{T \in \T}\frac{\OPT(T)}{\OPT},
\end{equation}
where $\T$ is the set of expanding HST embeddings of the terminals and
$\OPT$ is the cost of the optimal solution on the input graph. The
bound \eqref{eq:cr} is at most the expected distortion of the
embedding of the terminals into HSTs which is $O(\log k)$
\cite{Bartal:1996hg, Bartal:1998, Fakcharoenphol:2004cg}. However,
since \eqref{eq:cr} optimizes with respect to a particular subgraph,
namely $\OPT$, the bound can be much better depending on the
instance. Unlike previous work using tree embeddings, we get better
competitive ratios (deterministic $O(\log k)$ instead of randomized
$O(\log n)$). 
Furthermore, our algorithms are natural greedy-like algorithms that do
not need to know the entire underlying graph upfront.

Our work highlights an interesting connection between HST embeddings
and dual-fitting. We can interpret our charging scheme in terms of
dual-fitting --- given any expanding HST embedding of the terminals,
we build a dual solution that is feasible for the HST and charge
against it. These dual solutions can be highly infeasible for the
original graph, but averaging over the probabilistic embeddings of
\cite{Fakcharoenphol:2004cg} gives a dual solution that is feasible
after scaling by the embedding distortion. However, our approach
differs from dual-fitting in the usual sense --- we charge against
\emph{multiple} dual solutions simultaneously (one per embedding), and
the dual solutions are not built with respect to the original metric
but with respect to the HST embeddings of the terminals. In a sense,
our work uses HST embeddings as a black-box to generate good dual
solutions. Compared to the usual dual-fitting approach, we hide the
complexity of the dual construction within the HST embedding, allowing
us to give simpler algorithm descriptions since the algorithm no
longer needs to take the dual into account. Furthermore, we get a more
streamlined analysis and a unified approach to different problems. In
some cases, we also get a tighter upper bound (e.g. see footnote 4 in Section
\ref{sec:SN}).

\subsection{Our Results}
In this paper, we illustrate our technique on a variety of online network design problems. In all of these problems, there is an underlying graph $G = (V,E)$ with edge lengths $d(u,v)$. We assume w.l.o.g. that $G$ is a complete graph, the edge lengths satisfy the triangle inequality, i.e. $(V,d)$ is a metric space, and that the minimum edge length is $1$. Initially, the algorithm does not know $G$; at each time step online, it only knows the submetric over the arrived terminals.



For a (multi-)graph $H$, we define its \emph{cost} $c(H)$ to be the sum of lengths of edges in $H$. We use $n$ to denote the number of vertices $|V|$ in the graph $G$, and $k$ the number of terminals that arrive online. We remark that our competitive ratios below are optimal up to constants as these problems capture the online Steiner tree problem which has a $\Omega(\log k)$ lower bound~\cite{doi:10.1137/0404033}.

\paragraph{Steiner problems.}
In the \emph{online Steiner tree} problem, the algorithm is given a root terminal $r$ at the beginning. Terminals $i$ arrive online one-by-one and the algorithm maintains a subgraph $H$ connecting terminals to the root. In the \emph{online Steiner forest problem}, terminal pairs $(s_i,t_i)$ arrive online one-by-one and the algorithm maintains a subgraph $H$ in which each terminal pair is connected. In the \emph{online Steiner network problem with edge duplication}, each $(s_i,t_i)$ pair comes with a requirement $R_i$, and the algorithm maintains a multigraph $H$ which contains $R_i$ edge-disjoint $(s_i,t_i)$-paths. 
Note that allowing $H$ to be a multigraph means that the algorithm is allowed to buy multiple copies of an edge. For brevity, we will simply call this the \emph{online Steiner network problem}. The goal in these problems is to maintain a minimum-cost (multi-)graph $H$ satisfying the respective requirements subject to the constraint that once an edge is added to $H$, it cannot be removed later on.

Previously, deterministic $O(\log k)$-competitive algorithms were known for the online Steiner tree and Steiner forest problems~\cite{doi:10.1137/0404033,Berman:1997:OAS:258533.258618}. However, the best algorithm (as far as we know) for the online Steiner network problem is to use tree embeddings and yields a randomized $O(\log n)$ competitive ratio. Our first result closes the gap between the known competitive ratios of these problems.

\begin{theorem}
  \label{thm:proper}
  There is a deterministic $O(\log k)$-competitive algorithm for the
  online Steiner network problem (with edge duplication). 
\end{theorem}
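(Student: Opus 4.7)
The plan is to combine a deterministic greedy augmenting-path algorithm with the HST charging framework described in the introduction: for any expanding HST embedding $T$ of the arrived terminals, show that the algorithm's cost is $O(\OPT(T))$; then the FRT bound $\min_{T\in\T}\OPT(T) = O(\log k)\OPT$ yields the $O(\log k)$ competitive ratio. The algorithm itself maintains a multigraph $H$ and, upon arrival of request $(s_i,t_i,R_i)$, computes the current $s_i$-$t_i$ edge-connectivity $f$ of $H$; while $f<R_i$, it finds a shortest $s_i$-$t_i$ augmenting path in the residual graph (where still-unsaturated copies of edges in $H$ have length $0$ and all other $G$-edges carry their metric length), adds the new edges of this path to $H$, and increments $f$. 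Let $a_{i,1},\ldots,a_{i,q_i}$ denote the augmenting-path costs for request $i$, so that $\ALG = \sum_{i,j} a_{i,j}$.

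For the analysis, fix any expanding HST embedding $T$ of the arrived terminals; for each internal edge $e\in T$, let $S_e$ denote the terminal subtree below $e$ and $\ell(e)$ its length, so that $\diam(S_e) \leq \ell(e)$. On a tree with edge duplication, the Steiner-network optimum routes each pair along its unique tree path and may share copies of a tree edge across distinct pairs (since edge-disjointness is a per-pair requirement), so the optimal multiplicity on each tree edge is the maximum $R_i$ over crossing pairs, giving
\[
\OPT(T) \;=\; \sum_{e \in T} \ell(e)\cdot R_e^{\max}, \qquad R_e^{\max} \;=\; \max\bigl\{R_i : |\{s_i,t_i\}\cap S_e|=1\bigr\}.
\]
I would charge each augmenting-path cost $a_{i,j}$ to the unique HST edge $e$ on the $T$-path between $s_i$ and $t_i$ whose length $\ell(e)$ is the smallest power of two with $\ell(e) \geq a_{i,j}$; such an edge exists because $a_{i,j} \leq d(s_i,t_i) \leq d_T(s_i,t_i)$ (a fresh copy of the direct edge is always a valid augmenting path). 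By construction the pair $(s_i,t_i)$ crosses the cut at $e$, so $R_i \leq R_e^{\max}$, and each individual charge is at most $\ell(e)$.

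The key lemma to establish is that each HST edge $e$ receives total charge $O(\ell(e)\cdot R_e^{\max})$, and this is where I expect the main obstacle: the naive bound of at most $R_i$ charges per crossing pair, summed over crossing pairs, gives only $O(\ell(e)\cdot D_e)$ with $D_e = \sum_{i:\,\text{crossing}} R_i$, which can be much larger than $\ell(e)\, R_e^{\max}$ when many low-requirement pairs cross $e$. The fix refines the ``first-in-cluster-pays'' idea sketched for online Steiner tree in the introduction into a ``first-$R_e^{\max}$-crossings-pay'' argument: once the algorithm has purchased $R_e^{\max}$ edge-disjoint cross-cut paths in $H$ between $S_e$ and $V\setminus S_e$, any subsequent augmenting path for a crossing pair $i$ with $R_i \leq R_e^{\max}$ can route through these existing crossings at zero marginal cost for the crossing part, so its remaining within-side portion has cost strictly less than $\ell(e)$ and its charge migrates to a strictly deeper HST edge rather than to $e$. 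Hence at most $R_e^{\max}$ augmenting paths are ever charged to $e$, each of cost $O(\ell(e))$, yielding the claim. Summing over HST edges gives $\ALG = O(\OPT(T))$, and invoking FRT yields the theorem.
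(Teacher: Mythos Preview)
Your algorithm and charging scheme differ substantially from the paper's, and the key lemma you flag as the ``main obstacle'' has a genuine gap that your sketch does not close.

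The paper does not use a greedy augmenting-path algorithm. Instead it decomposes the instance into $O(\log R_{\max})$ scaled Steiner-forest subinstances (pairs with $R_i\in[2^\ell,2^{\ell+1})$ go to instantiation $\ell$), runs the Berman--Coulston algorithm on each, and buys $2^{\ell+1}$ copies of whatever that instantiation buys. The HST analysis then rests on a structural property of Berman--Coulston (Lemmas~\ref{lem:SF:sep} and~\ref{lem:SF:cost}): for any HST $T$, the cost of instantiation $\ell$ is at most $4\sum_j\sum_{C\in\C_j(T)}2^{j-1}f_\ell(C)$, where $f_\ell(C)\in\{0,1\}$ indicates whether $C$ separates some level-$\ell$ pair. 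Summing with weights $2^{\ell+1}$ and using the elementary bound $\sum_\ell 2^{\ell+1}f_\ell(C)\le 4\max_{i:(s_i,t_i)\in\delta(C)}R_i$ collapses the whole thing to $O(\OPT(T))$. The point is that the per-cut accounting is done at the level of \emph{terminal classes}, which Berman--Coulston guarantees are well-separated; no notion of ``edges crossing a terminal cut'' is needed.

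Your argument, by contrast, hinges on exactly such a notion, and it breaks in two places. First, $S_e\subseteq X$ is a set of \emph{terminals}, not a vertex cut in $G$; the augmenting paths your algorithm buys may pass through and terminate at non-terminal vertices, so ``$R_e^{\max}$ edge-disjoint cross-cut paths between $S_e$ and $V\setminus S_e$'' is not well-defined, and the HST diameter bound (which only controls terminal-to-terminal distances, and gives $\diam(S_e)<2\ell(e)$, not $\le\ell(e)$) tells you nothing about the distance from $s_i$ to a crossing endpoint already in $H$. Second, and more fatally, even granting some notion of crossing, your claim that the ``remaining within-side portion has cost strictly less than $\ell(e)$'' fails on the $t_i$-side: the complement of $S_e$ has no diameter bound whatsoever, so connecting $t_i$ to the far end of an existing crossing can be arbitrarily expensive. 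The charge therefore does not migrate to a \emph{deeper} HST edge as you assert; if anything it moves upward to a longer edge, where the same overcounting problem recurs. (A minor additional issue: the HST edge of a given length on the $s_i$--$t_i$ path is not unique, since the path ascends and then descends.) The ``first-$R_e^{\max}$-crossings-pay'' target is morally correct, but the greedy augmenting-path algorithm gives you no separation property to establish it, which is precisely what the paper's detour through Berman--Coulston supplies.
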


We remark that our approach also gives a simpler analysis of the Berman-Coulston online Steiner forest algorithm~\cite{Berman:1997:OAS:258533.258618}.

\paragraph{Rent-or-buy.} The rent-or-buy problem generalizes the Steiner forest problem. The algorithm is allowed to either \emph{rent} or \emph{buy} edges in order to satisfy a request. Buying an edge costs $M$ times more than renting, but once an edge is bought, it can be used for free in the future. On the other hand, a rented edge can only be used once; future terminals have to either re-rent it or buy it in order to use it.


More formally, in the \emph{online rent-or-buy problem}, the algorithm
is also given a parameter $M \geq 0$. The algorithm maintains a
subgraph $H$ of bought edges; when a terminal pair $(s_i, t_i)$
arrives, the algorithm buys zero or more edges and rents edges $Q_i$
such that $H \cup Q_i$ connects $s_i$ and $t_i$. Both rent and buy
decisions are irrevocable --- edges cannot be removed from $H$ later on and $Q_i$ is fixed after the $i$-th step.
The total cost of the algorithm is $Mc(H) + \sum_i c(Q_i)$. In the
\emph{online single-source rent-or-buy problem}, the algorithm is also
given a root terminal $r$ in advance; terminals $i$ arrive online and
$i$ has to be connected to $r$ in the subgraph $H \cup Q_i$.

Previously, Awerbuch, Azar and Bartal~\cite{Awerbuch:2004tn}\footnote{They called it the \emph{network connectivity leasing problem}.} 
%
gave a randomized $O(\log k)$-competitive algorithm as well as a deterministic $O(\log^2 k)$-competitive algorithm and posed the existence of a deterministic $O(\log k)$-competitive algorithm as an open problem. Our second result resolves this positively.

\begin{theorem}
  \label{thm:MROB}
  There is a deterministic $O(\log k)$-competitive algorithm for the
  online rent-or-buy problem.
\end{theorem}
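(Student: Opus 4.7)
The plan is to instantiate the HST-based charging framework developed earlier in the paper. I would consider a deterministic greedy algorithm that maintains a bought subgraph $H$ together with a per-edge rental counter $c_e$: on the arrival of pair $(s_i, t_i)$, it picks a shortest $(s_i, t_i)$-path $\pi_i$ in $G$ with $H$-edges weighted $0$, rents the edges of $\pi_i \setminus H$ (incrementing each such $c_e$ by $\mathrm{length}(e)$), and then adds to $H$ any edge with $c_e \geq M \cdot \mathrm{length}(e)$. This is the familiar amortized ``buy when tired of renting'' rule; the algorithm uses only the submetric over arrived terminals and never references a tree embedding. A simple per-edge amortization shows that the total cost charged by the algorithm to edge $e$ is $O(\min(u_e, M) \cdot \mathrm{length}(e))$, where $u_e$ counts the paths $\pi_i$ that traverse $e$.

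For the analysis, fix any expanding HST embedding $T$ of the terminals. The optimum rent-or-buy cost on a tree admits the clean form $\OPT(T) = \sum_{e_T \in T} \min(f_T(e_T), M) \cdot \mathrm{length}_T(e_T)$, where $f_T(e_T)$ is the number of pairs whose unique $T$-path uses $e_T$; in the HST's hierarchical decomposition this becomes
$$\OPT(T) = \Theta\!\left(\sum_C \min(\mathrm{sep}(C), M) \cdot 2^{\mathrm{level}(C)}\right),$$
where $\mathrm{sep}(C)$ is the number of pairs with exactly one endpoint in cluster $C$. I would show that $\ALG \leq O(\OPT(T))$; the theorem then follows from $\mathbb{E}_T[\OPT(T)] \leq O(\log k) \cdot \OPT$ (FRT), which is applied only in the analysis since the algorithm itself is deterministic.

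The charging scheme distributes each request's cost geometrically across the HST levels that separate its endpoints: for $(s_i, t_i)$ with smallest common level $j_i$, charge $\Theta(2^j)$ to the level-$j$ cluster separating the pair on each side, for each $j < j_i$. A cluster $C$ at level $j$ thus receives $O(\mathrm{sep}(C) \cdot 2^j)$ from rentals, which matches $\OPT(T)$ whenever $\mathrm{sep}(C) \leq M$. The main obstacle, and the heart of the proof, is the $\min(\cdot, M)$ cap: when $\mathrm{sep}(C) > M$ the naive charge overshoots $\OPT(T)$ by a factor of $\mathrm{sep}(C)/M$, so I must argue that the greedy buy rule has by then converted a comparable amount of rental into bought edges inside $C$, zeroing out the rental contribution of later pairs crossing $C$. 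I would attack this inductively over HST levels, combining the geometric shrinkage of cluster diameters with the amortized buy threshold to show that once $\Omega(M)$ pairs have crossed $C$, the algorithm has already purchased a subgraph of total length $O(2^{\mathrm{level}(C)})$ that absorbs subsequent traffic through $C$ --- in effect showing that the oblivious greedy rule, despite not knowing the HST, builds the right ``backbone'' inside every high-traffic cluster.
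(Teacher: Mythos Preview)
Your proposed algorithm differs fundamentally from the paper's, and the gap you flag as ``the main obstacle'' is real and not closed by your sketch. The per-edge ``buy when tired'' rule breaks in the metric model the paper works in: with the complete-graph assumption, each rented path is the single direct edge $(s_i,t_i)$, and distinct pairs rent distinct edges, so no counter $c_e$ ever reaches the buy threshold and the algorithm never buys. Even on a sparse underlying graph, rentals crossing an HST cluster $C$ can spread over $\Omega(|C|)$ different graph edges, so having $\Omega(M)$ pairs cross $C$ does not force any single $c_e$ to hit $M\cdot\mathrm{length}(e)$; your inductive claim that the algorithm ``builds the right backbone inside every high-traffic cluster'' has no mechanism behind it. This is exactly why the Awerbuch--Azar--Bartal deterministic algorithm along these lines only achieves $O(\log^2 k)$.

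The paper sidesteps this by aggregating at the level of \emph{terminals and distance classes}, not graph edges. When $(s_i,t_i)$ arrives with $d(s_i,t_i)\in[2^j,2^{j+1})$, the algorithm checks whether each endpoint has at least $M$ previously-arrived class-$j$ rent terminals within radius $2^{j-2}$ (the witness sets $W(s_i),W(t_i)$). If both do, the pair is handed to the Berman--Coulston online Steiner forest subroutine and its edges are bought; otherwise the direct edge $(s_i,t_i)$ is rented and exactly one endpoint is designated a rent terminal. The payoff is that any level-$j$ HST cut $C$ can contain at most $M$ class-$(j{+}2)$ rent terminals---were there more, the last arrival would have had a full witness set and been bought---which yields $|R_{j+2}\cap C|\le\min\{M,|D(C)|\}$ directly and hence $\sum_j 2^{j+1}|R_j|\le O(1)\,\OPT(T)$. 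The buy cost is charged back to the rent cost via disjointness of witness sets, using Lemma~\ref{lem:SF:sep} for the Steiner-forest subroutine. It is this terminal-based witness mechanism, not per-edge accounting, that makes the $\min(\cdot,M)$ cap line up with the HST decomposition.
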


\paragraph{Connected facility location.} In the \emph{online connected
  facility location problem}, we call terminals \emph{clients}. At the
beginning, the algorithm is given a parameter $M \geq 0$, a set of
\emph{facilities} $F \subseteq V$ and facility opening costs $f_x$ for
each facility $x \in F$. There is also a designated root facility $r
\in F$ with zero opening cost. The algorithm maintains a set of
\emph{open} facilities $F' \subseteq F$ and a Steiner tree $H$
connecting $F'$ and $r$. At any online step, the algorithm knows the
submetric over the arrived clients and the facilities. When a client $i$
arrives, the algorithm may open a new facility, and then assigns $i$ to
some open facility $\sigma(i) \in F'$. All decisions are irrevocable
--- edges cannot be removed from $H$ later on, clients cannot be
reassigned and opened facilities cannot be closed. The total cost of
the algorithm is $\sum_{z \in F'} f_z + Mc(H) + \sum_i d(i,
\sigma(i))$. When $M = 0$ (i.e. the facilities need not be connected),
this is called the \emph{online
  facility
  location
  problem} \cite{Meyerson:2001io}. For these problems, we use $k$ to denote the number of
clients.

This problem was recently proposed by San Felice, Williamson and Lee~\cite{felice} and they gave a randomized $O(\log^2 k)$-competitive algorithm
\footnote{At the time of submission, the author heard from
  San Felice (personal communication) that he independently obtained a
  randomized $O(\log k)$-competitive algorithm recently.}. Our third result improves on their work.

\begin{theorem}
  \label{thm:CFL}
  There is a deterministic $O(\log k)$-competitive algorithm for the
  online connected facility location problem.
\end{theorem}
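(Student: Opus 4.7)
My algorithm would be the natural greedy. When client $i$ arrives, compute $\alpha_i = \min_{z \in F'} d(i,z)$, the cheapest assignment to an already-open facility, and $\beta_i = \min_{x \in F \setminus F'} \bigl[f_x + M\cdot d(x,H) + d(i,x)\bigr]$, the total cost of opening a new facility $x$, extending the Steiner tree $H$ to reach it, and assigning $i$ to $x$. If $\alpha_i \le \beta_i$ the algorithm assigns $i$ to the nearest open facility; otherwise it opens the minimizer of $\beta_i$, extends $H$, and assigns $i$ there. Since opening, tree extension, and the triggering client's assignment are all bundled into $\beta_i$, the algorithm's total cost is exactly $\sum_i \min(\alpha_i,\beta_i)$.

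To prove the $O(\log k)$ bound I would invoke the HST framework described in the introduction. Fix an arbitrary expanding HST embedding of the clients with hierarchical decomposition $\{\C_j\}$, where every $C \in \C_j$ has diameter at most $2^j$ in the original metric. For each client $i$, let $j_i$ be the integer with $2^{j_i-1} \le \min(\alpha_i,\beta_i) \le 2^{j_i}$, and charge $c_i := \min(\alpha_i,\beta_i)$ to the cluster $C_i \in \C_{j_i}$ containing $i$. It suffices to show $\sum_i c_i \le O(1)\cdot \OPT(T)$; combining with $\min_T \OPT(T) \le O(\log k)\cdot \OPT$ via \cite{Fakcharoenphol:2004cg} then yields the theorem.

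The key step is a per-cluster charge bound: the total charge on each $C \in \C_j$ is at most a constant times the contribution of $\OPT(T)$ to $C$. Let $i^*$ be the first client of $C$ charged to level $j$. Immediately after $i^*$ is served some facility $x$ is open with $d(i^*,x) \le 2^j$: either $i^*$ opened $x$ (so $d(i^*,x)\le\beta_{i^*}\le 2^j$) or it assigned to an existing $x$ with $d(i^*,x)=\alpha_{i^*}\le 2^j$. Every later client $i'\in C$ then satisfies $\alpha_{i'} \le d(i',x) \le 2^{j+1}$ by the triangle inequality and the cluster-diameter bound, so all level-$j$ charges inside $C$ sum to $O(2^j\cdot|C|)$. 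On the other side, $\OPT(T)$ must serve all of $C$: either it opens a facility and pulls its Steiner tree within HST-distance $O(2^j)$ of $C$, paying $\Omega(f_{x^*} + M\cdot 2^j)$, or it routes each client of $C$ to a distant facility at HST-cost $\Omega(2^j)$ per client. Either regime matches the per-cluster algorithmic charge up to constants.

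The hard part will be the bookkeeping around the opening cost $f_x$ and the tree cost $M\cdot c(H)$: a single $f_x$ can legitimately be amortized over many clients spread across several HST clusters, so I must charge each opening to a single well-defined cluster (the one containing the triggering $i^*$) and avoid double-counting the associated tree-extension cost against any other cluster. Coordinating these three interacting cost components with a cluster-by-cluster lower bound on $\OPT(T)$ --- in particular, distinguishing the regime where $f_{x^*}$ dominates from those where tree extension or assignment dominates --- is where the subtlety lies, but it should proceed analogously to the single-source rent-or-buy analysis already developed in the paper, with an extra case handling the facility-opening term.
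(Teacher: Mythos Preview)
Your greedy algorithm is not $O(\log k)$-competitive. Take $M=0$ (pure facility location), the root $r$ at the origin, a single other facility $x$ at distance $1$ with $f_x = 1+\epsilon$, and $k$ clients all located at $x$. For every client $i$ you have $\alpha_i = 1$ and $\beta_i = 1+\epsilon$, so your greedy never opens $x$ and pays $k$, while $\OPT = 1+\epsilon$. In the language of your charging scheme, all $k$ clients land in the same level-$1$ cluster $C$ with total charge $\Theta(k)$, yet $\OPT(T)$'s contribution at $C$ is $O(1)$; the two regimes do \emph{not} match. The missing ingredient is accumulation: a correct algorithm must wait until enough nearby clients have paid assignment cost to collectively cover a new opening, and your per-client comparison $\alpha_i$ vs.\ $\beta_i$ never accumulates.

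More fundamentally, the analysis route you propose --- bounding the entire algorithm cost by $O(1)\cdot\OPT(T)$ for every HST embedding $T$ --- is impossible for this problem, and the paper says so explicitly. Fotakis proved an $\Omega(\log k/\log\log k)$ lower bound for online facility location even when the input metric \emph{is} an HST, so no online algorithm can be $O(1)$-competitive against $\OPT(T)$ once opening costs are present. The paper therefore does not try; it runs Fotakis's $O(\log k)$-competitive online facility location algorithm as a black-box ``virtual'' layer, restricts itself to opening only facilities the virtual layer has already opened, and charges all opening costs and the associated assignment costs directly to that subroutine (hence to $O(\log k)\OPTFL \le O(\log k)\OPT$). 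What remains --- the Steiner cost $Mc(H)$ and the assignment costs of clients that are far from every open facility --- is handled by a witness-set mechanism copied from the single-source rent-or-buy algorithm, and \emph{only this part} is charged against $\OPTROB(T)$ on the HST. Your sketch conflates these two pieces; separating them is the whole point.
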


\paragraph{Prize-collecting Steiner forest.} 
In the \emph{online prize-collecting Steiner forest} problem, each terminal pair $(s_i,t_i)$ arrives with a penalty $\pi_i$ and the algorithm either pays the penalty or connects the pair. 
The total cost of the algorithm is $c(H) + \sum_{i \in P} \pi_i$ where $H$ is the subgraph maintained by the algorithm and $i \in P$ if the algorithm paid the penalty for $(s_i, t_i)$. Note that penalties are irrevocable: once the algorithm decides to pay the penalty $\pi_i$, the penalty does not get removed from the algorithm's cost even if later on $H$ ends up connecting $s_i$ and $t_i$.

For this problem, we obtain the same deterministic $O(\log k)$ competitive ratio first achieved by Qian and Williamson~\cite{Qian:2011wg} but our charging scheme is simpler than their dual-fitting analysis.


\begin{theorem}[\cite{Qian:2011wg}]
  \label{thm:PCSF}
  There is a deterministic $O(\log k)$-competitive algorithm for the
  online prize-collecting Steiner forest problem.
\end{theorem}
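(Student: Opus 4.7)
The plan is to instantiate the paper's framework: design a natural online greedy algorithm, exhibit a charging scheme, and show that for every expanding HST embedding $T$ of the terminals, the algorithm's cost is within a constant factor of $\OPT(T)$; the $O(\log k)$ competitive ratio then follows from the expected distortion bound of \cite{Fakcharoenphol:2004cg}.

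The algorithm is a moat-growing adaptation of the online Steiner forest algorithm. Maintain a subgraph $H$, initially empty. When pair $(s_i,t_i,\pi_i)$ arrives, simultaneously grow balls $B(s_i,r)$ and $B(t_i,r)$ until either (i) the balls together with $H$ connect $s_i$ and $t_i$, or (ii) $r$ reaches $\pi_i/2$. In case (i), add the shortest resulting augmenting path to $H$; in case (ii), pay the penalty $\pi_i$. Writing $r_i$ for the final radius, the per-pair cost is at most $2r_i$ in both cases, so $\ALG \le \sum_i 2r_i$.

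Fix an expanding HST $T$ with level-$\ell$ partition $\C_\ell$ of cluster-diameter at most $2^\ell$, and let the HST edge above a cluster $C\in\C_\ell$ have length $\ell_{e_C}=\Theta(2^\ell)$. On the tree, the optimum picks a rejection set $P^*$ and must buy every HST edge crossed by any non-rejected pair's $s_i$--$t_i$ path, giving $\OPT(T)=\sum_{e_C\in H^*}\ell_{e_C}+\sum_{i\in P^*}\pi_i$. We distribute each pair's cost across HST levels: for each side of the pair and each level $\ell$, set
\[
y_{i,C_s^\ell} \;=\; \max\!\bigl\{0,\,\min(r_i,2^\ell)-2^{\ell-1}\bigr\},
\]
where $C_s^\ell$ is the level-$\ell$ cluster containing $s_i$, and symmetrically for $t_i$. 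This telescopes to $\sum_C y_{i,C}=2r_i$, hence $\ALG \le \sum_{i,C} y_{i,C}$.

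The remainder of the proof bounds $\sum_i y_{i,C}$ cluster by cluster. When $e_C\in H^*$, a ``first-arriver'' argument shows $\sum_i y_{i,C}=O(\ell_{e_C})$: if pair $i_0$'s $s$-side moat is the first to grow past radius $2^{\ell-1}$ while $s_{i_0}\in C$, then $i_0$ was connected by the algorithm, its augmenting path exits $C$, and every subsequent pair with an endpoint in $C$ reaches $H$ at moat-radius at most the diameter $2^\ell$ of the parent level-$\ell$ cluster, killing its level-$\ell$ charge; only a constant number of first-arrivers can fire per cluster. When $e_C\notin H^*$, every pair with $y_{i,C}>0$ crosses $C$ but is not connected by $\OPT(T)$, so $i\in P^*$, and the bound $y_{i,C}\le r_i\le\pi_i/2$ absorbs the contribution directly into $\OPT(T)$'s penalty term. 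The main obstacle is making the first-arriver argument robust when the first large-moat pair is itself \emph{rejected} by the algorithm---its moat is capped at $\pi_{i_0}/2$ and no new edges enter $H$---but then $i_0\in P^*$ as well, so its $y$-contribution is charged to $\pi_{i_0}$ and the first-arriver argument reapplies to the next pair whose $s$-side moat exceeds $2^{\ell-1}$, ultimately yielding $\sum_i y_{i,C}=O(\ell_{e_C})+O(1)\cdot\sum_{i\in P^*:\,\text{endpoint of }i\text{ in }C}\pi_i$.
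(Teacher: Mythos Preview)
Your algorithm is essentially the greedy online Steiner forest algorithm (connect each pair via the shortest augmenting path through the current $H$) with a penalty cap, and the first-arriver argument you invoke does not go through in the forest setting. The crucial step ``every subsequent pair with an endpoint in $C$ reaches $H$ at moat-radius at most $2^\ell$, killing its level-$\ell$ charge'' is false: in your algorithm the radius $r_i$ stops only when $s_i$ and $t_i$ become connected, not when one side touches $H$. Consider a star with center $c$ and leaves $t_1,\dots,t_m$ at distance $D$, with pairs $(c,t_1),\dots,(c,t_m)$ and infinite penalties. Every $r_i=D/2$, so for a level-$\ell$ cluster $C=\{c\}$ with $2^\ell<D/2$ you get $\sum_i y_{i,C}=m\cdot 2^{\ell-1}$, while the single HST edge above $C$ contributes only $\Theta(2^\ell)$ to $\OPT(T)$. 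Thus the per-cluster bound $\sum_i y_{i,C}=O(\ell_{e_C})$ you need is off by a factor of $m$. This is exactly the obstacle that separates Steiner forest from Steiner tree: in the rooted case, reaching $H$ \emph{is} reaching the partner, so the greedy argument works; in the forest case it is not, and indeed whether plain greedy is $O(\log k)$-competitive for Steiner forest was long open precisely because no such charging is known. Your handling of the rejected case has a related gap: ``then $i_0\in P^*$ as well'' does not follow, since $\OPT(T)$ may well connect a pair the algorithm rejected.

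The paper's route avoids this by \emph{not} using greedy for the connection step. It adapts the multi-source rent-or-buy algorithm (Section~\ref{sec:MROB}): when $(s_i,t_i,\pi_i)$ arrives, one sets $\class(s_i)=\class(t_i)=\lfloor\log d(s_i,t_i)\rfloor$, raises a cost share $\rho$ until either the accumulated shares of same-class terminals near \emph{both} $s_i$ and $t_i$ reach the threshold or $\rho$ hits $\pi_i$, and in the former case passes $(s_i,t_i)$ to the Berman--Coulston algorithm (Algorithm~\ref{alg:SF}). Berman--Coulston connects $s_i$ and $t_i$ to \emph{all} nearby terminals of each relevant class, not just along one augmenting path; this is what makes Lemma~\ref{lem:SF:sep} hold and what lets the paper bound the buy cost by the total cost share via disjoint witness sets (exactly as in Lemma~\ref{lem:MROB:share}). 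The per-cut bound on cost shares then follows as in Lemma~\ref{lem:MROB:C}, replacing $\min\{M,|D(C)|\}$ by $\min\{2^{j},\sum\pi_i\}$. The point is that the class of a terminal is fixed by $d(s_i,t_i)$, not by a moat radius that depends on $H$, so the star obstruction disappears: the $m$ copies of $c$ all have large class and the Berman--Coulston bookkeeping charges them against $m$ different HST edges on the $t$-sides, not against the single edge above $c$.
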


\paragraph{Our Techniques.}

For each problem, we design a natural greedy algorithm and show that the cost of the algorithm is at most a constant times the cost of the optimal solution on any HST embedding of the terminals. The key structural property of a HST embedding $T$ is that each edge of $T$ defines a cut $C \subseteq X$ whose diameter is proportional to the length of the edge (see Definition \ref{def:HST}), so we can express $\OPT(T)$ in terms of contributions from bounded-diameter cuts. In the analysis, we charge the cost of the algorithm to these cuts 
by dividing the cost of the algorithm among terminals and charging the cost share of each terminal to a bounded-diameter cut of $T$ containing it. We then argue that since the algorithm is greedy and the cuts have bounded diameter, each cut receives a charge of at most a constant times its contribution to $\OPT(T)$. Thus, the cost of the algorithm is at most $O(1)\OPT(T)$ for any HST embedding $T$ of the terminals. Since \cite{Fakcharoenphol:2004cg} gives a probabilistic HST embedding of the terminals with expected distortion $O(\log k)$, there exists a HST embedding $T^*$ such that $\OPT(T^*) \leq O(\log k)\OPT$. This gives us the desired $O(\log k)$ bound on the competitive ratio of the algorithm. Note that we do not need any particular property of the embedding of \cite{Fakcharoenphol:2004cg}. It is only used to prove the existence of $T^*$.

\subsection{Related Work}
There is a long history of research on network design in the offline setting. Agrawal, Klein and Ravi~\cite{Agrawal:1995:TCA:204695.204699} gave a
primal-dual $2$-approximation algorithm for the Steiner forest
problem. This was later extended by Goemans and
Williamson~\cite{doi:10.1137/S0097539793242618} to give a
$2$-approximation for the more general constrained forest and
the prize-collecting Steiner tree problems. 
In a breakthrough result, Jain~\cite{jain2001factor} gave a
$2$-approximation algorithm using iterative LP rounding for the
Steiner network problem. We remark that Jain's result holds even when
edge duplication is not allowed. Swamy and Kumar~\cite{Swamy:2004tc}
gave a primal-dual $4.55$-approximation algorithm for the
single-source rent-or-buy problem and a $8.55$-approximation algorithm
for the connected facility location problem. Later, Gupta et al.
\cite{gupta2007approximation} improved the constants using an elegant
sample-and augment approach.

Online network design was first studied by Imase and Waxman~\cite{doi:10.1137/0404033} who showed that the greedy algorithm is $O(\log k)$-competitive for the online Steiner tree problem and gave a matching lower bound. Awerbuch, Azar and Bartal~\cite{Awerbuch:2004tn} showed that a greedy algorithm is $O(\log^2 k)$-competitive for the online Steiner forest problem; later, Berman and Coulston~\cite{Berman:1997:OAS:258533.258618} gave a $O(\log k)$-competitive algorithm. Recently, Qian and Williamson~\cite{Qian:2011wg} extended the approach of \cite{Berman:1997:OAS:258533.258618} to give an $O(\log k)$-competitive algorithm for the online constrained forest and prize-collecting Steiner forest problems. As far as we know, there is no previous work on the online Steiner network problem with edge duplication. When edge duplication is disallowed, unlike the offline setting, the problem becomes much harder. Gupta et al.~\cite{doi:10.1137/09076725X} showed a lower bound of $\Omega(\min\{k, \log n\})$ and an upper bound of $\tilde{O}(\Rmax \log^3n)$, where $\Rmax$ is the maximum requirement.

The online rent-or-buy problem was first considered by Awerbuch, Azar
and Bartal~\cite{Awerbuch:2004tn}, and they gave a deterministic
$O(\log^2 k)$-competitive algorithm and a randomized $O(\log
k)$-competitive algorithm. The online connected facility location
problem was recently proposed by San Felice et al~\cite{felice} and
they presented a randomized $O(\log^2 k)$-competitive algorithm based
on the offline connected facility location algorithm of Eisenbrand et
al.~\cite{eisenbrand2008approximating}. 
Surprisingly, 
Fotakis~\cite{fotakis2008competitive} showed that there is a
lower bound of $\Omega(\frac{\log
  k}{\log \log
  k})$ for the special case of the facility location problem ($M = 0$)
even when the underlying metric space is a HST.

For tree embeddings, Bartal showed that any graph can be
probabilistically embedded into HSTs with $O(\log^2 n)$ expected
distortion~\cite{Bartal:1996hg}
  and 
  subsequently improved this to $O(\log n \log \log n)$
  \cite{Bartal:1998}. Fakcharoenphol et
  al.~\cite{Fakcharoenphol:2004cg} achieved the tight $O(\log n)$
  bound.

\paragraph{Roadmap.}
We start with the necessary HST embedding definitions and results in Section \ref{sec:prelim}. Next, we warm up by illustrating our techniques to analyze the greedy algorithm for the Steiner tree problem in Section \ref{sec:steinertree}. In Section \ref{sec:steiner}, we apply our analysis framework to the Berman-Coulston algorithm for online Steiner forest algorithm \cite{Berman:1997:OAS:258533.258618} (yielding a simpler analysis) and show that this essentially allows us to reduce the online Steiner network instance to several online Steiner forest instances. 

We use the same high-level approach for the rent-or-buy, connected facility location and prize-collecting Steiner problems as they share very similar cost structures. The key ideas are illustrated using the single-source rent-or-buy problem in Section \ref{sec:SROB}. Then we extend these ideas to the multi-source setting in Section \ref{sec:MROB} and the connected facility location problem in Section \ref{sec:CFL}. 
The objective of the prize-collecting Steiner tree problem is closely related to that of the single-source rent-or-buy problem. In Section \ref{sec:PCST}, we show how a straightforward adaptation of the single-source rent-or-buy algorithm gives a prize-collecting Steiner tree algorithm. We omit a discussion of the prize-collecting Steiner forest algorithm as it is obtained via an identical adaptation of the multi-source rent-or-buy algorithm. 

We remark that the analyses for single-source rent-or-buy, connected facility location and prize-collecting Steiner tree rely on a property of the greedy online Steiner tree algorithm (Lemma \ref{lem:r-sep}), and the multi-source rent-or-buy algorithm depends on a property of the Berman-Coulston online Steiner forest algorithm (Lemma \ref{lem:SF:sep}).

\section{Basics of HST Embeddings}
\label{sec:prelim}
In this section, we present the necessary metric embedding definitions
and results. Let $(X,d)$ be a metric over a set of $k$ terminals with
the smallest distance $\min_{u,v\in X} d(u,v) = 1$. We are only
interested in \emph{expanding} embeddings, i.e. embeddings that do not
shrink distances.


\begin{definition}[Embeddings]
  A metric $(X',d')$ is an \emph{$\alpha$-distortion embedding} of
  $(X',d')$ with \emph{distortion} $\alpha \geq 1$ if $X' \supseteq X$
  and for all $u,v \in X$, we have $d(u,v) \leq d'(u,v) \leq \alpha
  d(u,v)$.
\end{definition}


In this paper, we will be concerned only with embeddings into
hierarchically separated trees~\cite{Bartal:1996hg}.

\begin{definition}
  [HST Embeddings] 
  \label{def:HST}
  A \emph{hierarchically separated tree (HST) embedding} $T$ of
  $(X,d)$ is a rooted tree with height $\lceil \log (\max_{u,v \in X}
  d(u,v)) \rceil$ and edge lengths that are powers of $2$ satisfying
  the following properties.
  \begin{enumerate}
  \item The leaves of $T$ are exactly the terminals $X$.
  \item The distance from any node to each of its children is the
    same.
  \item The edge lengths decrease by a factor of $2$ as one moves
    along a root-to-leaf path.
  \item For an edge $e$ in $T$, let $C_e \subseteq X$ be the subset of
    terminals that are separated by $e$ from the root. We require that
    if $e$ has length $2^{j-1}$, then $d(u,v) < 2^j$ for any $u,v \in
    C_e$.
  \end{enumerate}
%
\end{definition}

\noindent The last property will be crucial to our analyses. See
Figure \ref{fig:HST} for an example of a HST embedding.
\begin{figure}
  \centering
  \includegraphics[scale=0.5]{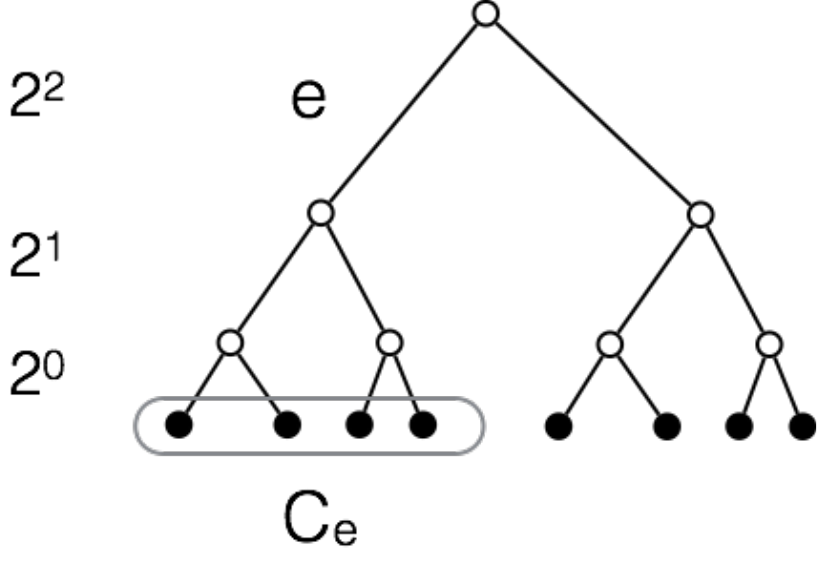}
  \caption{A binary HST embedding with $3$ levels. The terminals $X$
    are shown in solid black. The edge $e$ is a
    level-$3$ edge with length $2^2$ and $C_e$ is the set of terminals
    it separates from the root.}
  \label{fig:HST}
\end{figure}


A HST embedding $T$ of a metric space $(X,d)$ defines a hierarchical
decomposition of $(X,d)$ in the following way.  Call an edge $e$ of
length $2^{j-1}$ a \emph{level-$j$ edge} of $T$ and $C_e$ a
\emph{level-$j$ cut}. Denote by $E_j(T)$ the set of level-$j$ edges
and 
define $\C_j(T) = \{C_e : e \in E_j(T)\}$ to be the set of all
level-$j$ cuts. Now since the leaves of $T$ are exactly $X$, the 
level-$j$ cuts $\C_j(T)$ partitions $X$ into subsets of diameter less
than $2^j$. The family of partitions $\C(T) =
\{\C_j(T)\}_j$ is called the \emph{hierarchical decomposition} of
$(X,d)$ defined by $T$.  Note that there are $\lfloor \log (\max_{u,v
  \in X} d(u,v)) \rfloor$ levels and the level-$0$ cuts $\C_0(T)$ are
just the terminal singletons.

In the following, we will use the notation $\delta(C)$ to denote the
set of vertex pairs with exactly one endpoint in $C$, i.e. $\delta(C) =
\{(u,v) : |\{u,v\} \cap C| = 1\}$. Note that $(u,v) \in \delta(C_e)$
if and only if $e$ lies on the path between $u$ and $v$ in $T$.

We denote by $T(u,v)$ the distance between $u,v$ in
$T$. Fakcharoenphol et al.~\cite{Fakcharoenphol:2004cg} showed that
any metric can be embedded into HSTs with logarithmic expected
distortion.
\begin{theorem}[\cite{Fakcharoenphol:2004cg}]
  \label{thm:FRT}
  For any metric $(X,d)$, there exists a distribution $\D$ over HST
  embeddings $T$ such that $E_{T \sim \D}[T(u,v)] \leq O(\log
  k)d(u,v)$ for all $u,v \in X$.
\end{theorem}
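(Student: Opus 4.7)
The plan is to construct a distribution over HST embeddings by the standard random hierarchical partitioning construction of Fakcharoenphol--Rao--Talwar, and then bound the expected edge length separating any pair $u,v$ via a harmonic-sum charging argument. Let $\Delta = \max_{u,v \in X} d(u,v)$ and let $L = \lceil \log \Delta \rceil$.

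First I would describe the random partition procedure. Draw a uniformly random permutation $\pi$ of the $k$ terminals and an independent scalar $\beta \in [1,2]$ with density $1/(\beta \ln 2)$. For each level $j = 0,1,\dots,L$, partition $X$ into clusters as follows: assign each point $u \in X$ to the \emph{first} point $v$ in the permutation $\pi$ satisfying $d(u,v) \leq \beta \cdot 2^{j-1}$. The cluster at level $j$ containing $u$ consists of all points assigned to the same center as $u$. Since every cluster has radius at most $\beta \cdot 2^{j-1} \leq 2^j$, its diameter is less than $2 \cdot 2^{j-1} \cdot \beta \leq 2^{j+1}$; scaling the radius parameter by $1/2$ (or equivalently shifting the level indices by a constant) ensures the level-$j$ diameter bound $< 2^j$ required by Definition~\ref{def:HST}. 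The level-$j$ partition naturally refines the level-$(j+1)$ partition since the radius threshold only shrinks, so these partitions form a laminar family which we realize as a rooted tree $T$: the root corresponds to the top-level single cluster, level-$j$ nodes correspond to level-$j$ clusters, and the edge connecting a level-$j$ node to its level-$(j-1)$ child has length $2^{j-1}$. By construction the leaves are the terminals, and properties (1)--(4) of Definition~\ref{def:HST} are satisfied.

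Next I would bound $E[T(u,v)]$ for a fixed pair $u,v \in X$. Observe that $T(u,v) \leq 4 \cdot 2^{j^*}$ where $j^*$ is the smallest level at which $u$ and $v$ lie in the same cluster, so it suffices to bound $\sum_{j} 2^j \cdot \Pr[\text{$u,v$ lie in distinct level-$j$ clusters}]$ up to constants. For each level $j$, $u$ and $v$ are separated iff the first point $w$ in $\pi$ (among all points) satisfying $\min(d(u,w),d(v,w)) \leq \beta 2^{j-1}$ captures exactly one of $u,v$, i.e.\ $\max(d(u,w),d(v,w)) > \beta 2^{j-1} \geq \min(d(u,w),d(v,w))$. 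Order the terminals as $w_1,w_2,\dots,w_k$ by increasing distance $\min(d(u,w_\ell),d(v,w_\ell))$. I would charge the separation event at level $j$ to the earliest $w_\ell$ in the permutation $\pi$ that can serve as such a cutter. The key combinatorial observation is: if we charge to $w_\ell$, then (a) $w_\ell$ must precede $w_1,\dots,w_{\ell-1}$ in $\pi$ (probability $1/\ell$), and (b) the value $\beta$ must fall in an interval of length at most $d(u,v)/2^{j-1}$, contributing at most $O(d(u,v)/2^j)$ under the $1/(\beta \ln 2)$ density.

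Summing over $j$ and over possible cutters $w_\ell$ using these two independent contributions, the total expected tree distance becomes
\[
E[T(u,v)] \;\leq\; O(1)\sum_{\ell=1}^{k} \frac{1}{\ell}\cdot d(u,v) \;=\; O(\log k)\, d(u,v),
\]
which is the desired bound. The main obstacle is the charging step: one must show that each separation event at each level can be attributed to a \emph{unique} point $w_\ell$ (the first candidate in $\pi$), and that once we sum over $j$ the contribution per $w_\ell$ telescopes into $O(d(u,v)/\ell)$ rather than naively $O(L \cdot d(u,v)/\ell)$. The telescoping is what converts an $O(L \log k) = O(\log \Delta \cdot \log k)$ bound into the tight $O(\log k)$ bound, and it is achieved precisely by the logarithmic density on $\beta$, which integrates out the level dependence so that each rank $\ell$ contributes a clean $1/\ell$ term.
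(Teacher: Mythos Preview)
The paper does not prove this theorem; it is stated with a citation to Fakcharoenphol, Rao and Talwar and used as a black box (only the existence of some $T^*$ with $\OPT(T^*)\le O(\log k)\OPT$ is ever invoked, via Corollary~\ref{cor:main}). So there is no proof in the paper to compare against, and your sketch is simply a reconstruction of the FRT argument rather than an alternative to anything the paper does.

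Your sketch is essentially correct as an outline of the FRT proof, with one real gap. The claim that ``the level-$j$ partition naturally refines the level-$(j+1)$ partition since the radius threshold only shrinks'' is false as stated: if at each level you \emph{globally} assign every point of $X$ to the first $\pi$-center within the current radius, increasing the radius can change a point's center to an earlier one in $\pi$, and two points that shared a center at the smaller radius can then be sent to different centers at the larger radius. (Concretely, take $\pi=(a,c)$ with $d(u,a)=3$, $d(v,a)=5$, $d(u,c)=d(v,c)=1$; at threshold $2$ both $u,v$ go to $c$, but at threshold $4$ we get $u\to a$, $v\to c$.) The standard fix is to run the partitioning \emph{top-down}, recursively partitioning only within each existing cluster; this forces laminarity by construction and does not affect the separation-probability analysis, since restricting the pool of candidate centers to the current cluster can only decrease the chance of cutting $u,v$. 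The rest of your argument --- ranking potential cutters by $\min(d(u,\cdot),d(v,\cdot))$, the $1/\ell$ permutation bound, and the use of the $1/(\beta\ln 2)$ density to collapse the sum over levels into a single $O(d(u,v)/\ell)$ term per rank --- is the FRT argument and is fine as a sketch.
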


The following corollary follows by standard arguments. We will apply
it to the Steiner network and rent-or-buy problems.

\begin{corollary}
  \label{cor:main}
  For any network design instance with terminals $X$ whose
  objective is a linear combination of edge lengths $d(u,v)$,
  there exists a distribution $\D$ over HST embeddings $T$ of $(X,d)$
  such that $E_{T \sim \D}[\OPT(T)] \leq O(\log k)\OPT$, where $\OPT$ (and $\OPT(T)$ resp.) is the cost of the optimal solution on
$(X,d)$ (and $T$ resp.).
\end{corollary}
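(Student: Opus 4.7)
The plan is to derive the corollary from Theorem \ref{thm:FRT} by a standard lifting argument. First I would take $\D$ to be exactly the distribution over HST embeddings of $(X,d)$ guaranteed by Theorem \ref{thm:FRT}. Given a sample $T \sim \D$, I would lift the optimal solution $\OPT$ on $(X,d)$ to a solution on $T$ by replacing every edge $(u,v)$ used by $\OPT$ (with multiplicities, as in the Steiner network problem) by the unique $u$-to-$v$ path in $T$. Call the resulting object $T(\OPT)$. Since the leaves of $T$ are exactly the terminals $X$ and the connectivity requirements of every problem considered in the paper are specified purely in terms of terminal pairs, $T(\OPT)$ is a feasible solution for the same instance defined on $T$. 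In particular $\OPT(T) \leq c(T(\OPT))$.

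Next I would control $c(T(\OPT))$. Because the objective is a linear combination of edge lengths and the total length of the $T$-path lifting edge $(u,v)$ is exactly $T(u,v)$, we get $c(T(\OPT)) \leq \sum_{(u,v) \in \OPT} T(u,v)$, where the sum carries whatever coefficients or multiplicities make $\sum_{(u,v) \in \OPT} d(u,v) = \OPT$ on the original metric. Taking expectations over $T \sim \D$, linearity of expectation together with the bound $E_{T \sim \D}[T(u,v)] \leq O(\log k)\, d(u,v)$ from Theorem \ref{thm:FRT} yields
\[
E_{T \sim \D}[\OPT(T)] \;\leq\; E_{T \sim \D}[c(T(\OPT))] \;\leq\; O(\log k)\sum_{(u,v) \in \OPT} d(u,v) \;=\; O(\log k)\OPT,
\]
which is exactly the desired inequality.

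I do not expect any genuine obstacle here; the only point that requires care is verifying that the path-lifting construction produces a feasible solution for each of the network design problems in the paper. This reduces to the observation that each problem's feasibility is phrased through connectivity (or assignment) constraints between terminals, and a tree path between two terminals certifies their connectivity in $T$. Thus the corollary is essentially a repackaging of Theorem \ref{thm:FRT} in the language of network design objectives.
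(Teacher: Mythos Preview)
Your proposal is correct and is precisely the standard lifting argument the paper has in mind: the paper simply states that the corollary ``follows by standard arguments'' after Theorem~\ref{thm:FRT}, and (in commented-out text) sketches exactly your construction of $T(\OPT)$ by replacing each edge of $\OPT$ with its $T$-path, observing $\OPT(T)\le c(T(\OPT))$, and applying linearity of expectation. There is nothing to add.
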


Again, we emphasize that no particular property of the distribution
$\D$ is required. We only need the existence of a HST embedding $T^*$
such that $\OPT(T^*) \leq O(\log k)\OPT$.

\end{proof}

We are now ready to bound the competitive ratio of the greedy
algorithm. Lemmas \ref{lem:r-steiner:share} and
\ref{lem:r-steiner:cost} imply that the cost of the greedy algorithm
is at most $O(1)\OPT(T)$ for any HST embedding $T$ of
$(X,d)$. Furthermore, Corollary \ref{cor:main} implies that there
exists a HST embedding $T^*$ such that $\OPT(T^*) \leq O(\log
k)\OPT$. Thus, the greedy algorithm is $O(\log k)$-competitive for the
online Steiner tree problem.




\section{Steiner Network}
\label{sec:steiner}
In this section, we consider the Steiner network problem and prove
Theorem \ref{thm:proper}. 
Recall that in the online Steiner network problem, a terminal pair
$(s_i,t_i)$ with requirement $R_i$ arrives at each online step and the
algorithm maintains a multigraph $H$ which contains $R_i$
edge-disjoint $(s_i,t_i)$-paths for each arrived terminal pair. Let
$\Rmax = \max_i R_i$ be the maximum requirement. Note that the Steiner
forest problem is a special case with $\Rmax = 1$.




\paragraph{Intuition.}

Consider a ``scaled'' online Steiner forest instance in which all
terminal pairs have the same requirement $R$. Since we can buy
multiple copies of an edge, we can just run the $O(\log
k)$-competitive Berman-Coulston algorithm and buy $R$ copies of every
edge it buys. The cost of the solution found is $O(\log k)\OPT$. This
suggests the following approach: decompose the online Steiner network
instance into $O(\log \Rmax)$ scaled Steiner forest subinstances and
use the Berman-Coulston algorithm on each of them.

Agrawal et al.~\cite{Agrawal:1995:TCA:204695.204699} applied this
approach in the offline setting (using their offline Steiner forest
algorithm) and obtained a $O(\log \Rmax)$-approximation, losing a
factor of $O(\log \Rmax)$ over the approximation factor of their
Steiner forest algorithm. There is a matching $\Omega(\log
\min\{\Rmax, k\})$ lower bound for this algorithm as
well. Intuitively, the $O(\log \Rmax)$ loss makes sense since each of
the $O(\log \Rmax)$ subinstances are treated
independently. Surprisingly, we show that in the online setting, this
approach is $O(\log k)$-competitive, losing only a constant factor
over the competitive ratio of the Berman-Coulston algorithm.



The analysis consists of two ingredients. The first is that for any
online Steiner forest instance, the cost of the Berman-Coulston
algorithm is at most a constant times the cost of the optimal solution
on any HST embedding of the terminals (Lemma \ref{lem:SF:cost}). The second is the following key
idea: for a Steiner network instance on a tree, the total cost of the
optimal solutions to the scaled Steiner forest subinstances is at most
a constant times the cost of the optimal solution to the original
Steiner network instance\footnote{In general graphs, we lose a factor
  of $O(\log
  \Rmax)$.}. 

\subsection{Steiner Forest}
\label{sec:forest}





We now show that given an online Steiner forest instance, the
Berman-Coulston algorithm (we call it Algorithm \ref{alg:SF}
henceforth) finds a solution whose cost is at most a constant times
the optimal solution on any HST embedding of the terminals. We remark
that this also gives a simpler analysis of the
algorithm. 

\paragraph{Algorithm.}
Algorithm \ref{alg:SF} proceeds as follows. When a terminal pair
$(s_i,t_i)$ arrives, it classifies $s_i$ and $t_i$ based on
their distance: $\class(s_i) = \class(t_i) = \lfloor \log d(s_i,t_i)
\rfloor$. The algorithm then proceeds in \emph{levels}, starting from
level $j = 0$ up to $\class(i)$; at level $j$, it connects each
terminal $v$ of class at least $j$ to $s_i$ if $d(s_i,v) < 2^{j+1}$,
or to $t_i$ if $d(t_i,v) < 2^{j+1}$. We emphasize that we always
consider distances according to the original metric $d$; we do not
contract added edges.
\begin{algorithm}
\caption{Berman-Coulston Algorithm for Steiner Forest}
  \begin{algorithmic}[1]
   \label{alg:SF}
   \STATE $H \leftarrow \emptyset$
   \WHILE {request $(s_i,t_i)$ arrives}
   \STATE Set class of $s_i$ and $t_i$ to be $\lfloor \log d(s_i,t_i) \rfloor$
   \FOR {level $j = 0$ \TO $\lfloor \log d(s_i,t_i) \rfloor$}
   \FOR {$v$ such that $\class(v) \geq j$ and $d(s_i,v) <
     2^{j+1}$} 
   \STATE Add $(s_i,v)$ to $H$
   \ENDFOR
   \FOR {$v$ such that $\class(v) \geq j$ and $d(t_i,v) <
     2^{j+1}$} 
   \STATE Add $(t_i,v)$ to $H$
   \ENDFOR
   \ENDFOR
   \ENDWHILE
 \end{algorithmic}
\end{algorithm}

\paragraph{Analysis.}
Let $X$ be the set of terminals that arrived and define $X_j$ to be
the set of terminals $t$ with $\class(t) \geq j$. Let $A_j$ be the set
of edges added in level $j$ of some iteration.
Note that $(u,v) \in A_j$ implies that $d(u,v) < 2^{j+1}$ and
both $u$ and $v$ have class at least $j$.

We need the following lemma for our charging scheme.

\begin{lemma}
  \label{lem:SF:sep}
  For each class $j$, let $\Scal_j$ be a collection of disjoint
  subsets of $X$ such that
  \begin{itemize}
  \item $\Scal_j$ covers $X_j$ and
  \item for each $S \in \Scal_j$, we have $d(u,v) < 2^j$ for $u,v \in S$.
  \end{itemize}
  Then we have $c(H) \leq \sum_j 2^{j+1}|\Scal_j|$.
\end{lemma}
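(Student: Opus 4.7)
The plan is to decompose the cost of the subgraph $H$ by the level at which each edge was introduced, writing $c(H) = \sum_j c(A_j)$, and bound each $c(A_j)$ in terms of $|\Scal_j|$. Since every edge $(u,v) \in A_j$ satisfies $d(u,v) < 2^{j+1}$ by the range conditions in Algorithm \ref{alg:SF}, it is enough to prove that $|A_j| \leq |\Scal_j|$, as this yields $c(A_j) \leq 2^{j+1}|A_j| \leq 2^{j+1}|\Scal_j|$ and summing over all levels gives the lemma.

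The core structural claim I plan to establish is that no level-$j$ edge of $A_j$ has both endpoints inside a single set $S \in \Scal_j$. Indeed, any two terminals $u, u' \in S \cap X_j$ lie at distance strictly less than $2^j$, so whichever of them arrives second finds the other already present and within the level-$\ell$ addition radius $2^{\ell+1}$ for some $\ell \leq j-1$, at which point they are merged into the same component of $H$. Consequently the algorithm never introduces a new edge between them at level $j$, and $A_j$ contains only ``inter-cluster'' edges joining distinct sets of $\Scal_j$. Contracting each $S \in \Scal_j$ to a single super-node, every edge of $A_j$ joins two previously disconnected super-components, so these edges form a forest on $|\Scal_j|$ super-nodes, giving $|A_j| \leq |\Scal_j| - 1 \leq |\Scal_j|$ and hence $c(A_j) \leq 2^{j+1}|\Scal_j|$.

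The main obstacle I anticipate is justifying the ``already connected at a lower level'' step carefully, because a terminal's class is set only at the moment it enters a request and an outer iteration processes both $s_i$ and $t_i$ through the inner level loop from $0$ upward. I expect to argue by induction on arrival order that the earlier of $u, u' \in X_j$ is always visible (with class at least $\ell$) to the later arrival when the algorithm reaches level $\ell < j$, using that any terminal in $X_j$ has class at least $j$ by the time it is enrolled. Handling the edge case where $u$ and $u'$ arrive as part of the same pair $(s_i, t_i)$ — so that both are being processed simultaneously — is the technical crux.
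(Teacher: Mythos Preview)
Your approach is the paper's: build the meta-graph with nodes $\Scal_j$ and meta-edges $A_j$, and show it is acyclic to get $|A_j|\le|\Scal_j|$. Be aware, though, that the claim you label as ``core'' --- that no edge of $A_j$ has both endpoints inside one $S$ --- only rules out self-loops in the meta-graph; the forest conclusion you then assert is strictly stronger. The paper proves acyclicity directly: take a hypothetical meta-cycle, let $(u_\ell,v_\ell)$ be its last-added edge, and apply exactly your intra-cluster fact (all arrived terminals of $X_j\cap S$ are already connected in $H$ by the start of level $j$) to \emph{every} set along the cycle, concluding that $u_\ell$ and $v_\ell$ were already connected and would not have been added. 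Your intra-cluster argument is the right ingredient; it just needs to be applied along an arbitrary meta-cycle, not only to a single intra-cluster pair.

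Your anticipated ``technical crux'' is a non-issue: if $u,u'\in S\cap X_j$ then $d(u,u')<2^j$, whereas mates satisfy $d(s_i,t_i)\ge 2^{\class(s_i)}\ge 2^j$, so two terminals of the same request pair cannot both lie in $S$.
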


\begin{proof}
  We have that $c(H) \leq \sum_j 2^{j+1}|A_j|$ and so it suffices to
  prove that $|A_j| \leq |\Scal_j|$ for each $j$. Fix $j$ and consider
  the following meta-graph: nodes correspond to $\Scal_j$; for each
  edge $(u,v) \in A_j$, there is a meta-edge between the nodes
  corresponding to sets containing $u$ and $v$. This is well-defined
  since $u,v \in X_j$ and $\Scal_j$ covers $X_j$. The meta-edges
  correspond to $A_j$ and the number of nodes is $|\Scal_j|$. We will
  show that the meta-graph is acyclic and so $|A_j| \leq |\Scal_j|$,
  as desired.

  Suppose, towards a contradiction, that there is a cycle in the
  meta-graph. Thus, there exists edges $(u_1, v_1), \ldots, (u_\ell,
  v_\ell) \in A_j$ and sets $S_1, \ldots, S_\ell \in \Scal_j$ such
  that $v_i, u_{i+1} \in S_{i+1}$ for each $i < \ell$ and $v_\ell, u_1
  \in S_1$. Suppose that $(u_\ell, v_\ell)$ was the last edge that was
  added by the algorithm. Since $(u_\ell, v_\ell) \in A_j$, it was
  added in level $j$ of some iteration. Consider the point in time
  right before the algorithm added $(u_\ell,v_\ell)$. For $1 \leq i
  \leq \ell$, the terminals $X_j \cap S_i$ that have arrived by this
  iteration are already connected by now since $\diam(S_i) < 2^j$ and
  they have class at least $j$. Therefore $u_\ell$ and $v_\ell$ are
  actually already connected at this time, and so the algorithm would
  not have added the edge $(u_\ell,v_\ell)$. This gives the desired
  contradiction and so the meta-graph is acyclic, as desired.
\end{proof}

\begin{lemma}
  \label{lem:SF:cost}
  Define the function $f :2^X \rightarrow \{0,1\}$ as follows: $f(S) =
  1$ if $S$ separates a terminal pair. For any HST
  embedding $T$ of $(X,d)$, we have $c(H) \leq 4\sum_j \sum_{C \in
    \C_j} 2^{j-1}f(C) = 4\OPT(T)$.
\end{lemma}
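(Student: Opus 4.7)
The plan is to apply Lemma \ref{lem:SF:sep} with the natural choice of $\Scal_j$ induced by the hierarchical decomposition $\C(T)$. Specifically, for each level $j$ I would set
\[
\Scal_j \;=\; \{\, C \cap X_j : C \in \C_j(T),\ C \cap X_j \neq \emptyset \,\}.
\]
Since $\C_j(T)$ partitions $X$, these sets are disjoint, and by construction they cover $X_j$. Each $C \in \C_j(T)$ has diameter strictly less than $2^j$ by property (4) of Definition \ref{def:HST}, so the same is true of $C \cap X_j$. Thus $\Scal_j$ meets the hypotheses of Lemma \ref{lem:SF:sep}, which gives $c(H) \leq \sum_j 2^{j+1}|\Scal_j|$.

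The key counting step is to argue $|\Scal_j| \leq \sum_{C \in \C_j(T)} f(C)$. It suffices to show that every $C \in \C_j(T)$ which contains some terminal of class $\geq j$ must separate some terminal pair, and hence has $f(C) = 1$. If $v \in C$ has $\class(v) \geq j$, then $v$ is an endpoint of some pair $(s_i,t_i)$ with $\lfloor \log d(s_i,t_i)\rfloor = \class(v) \geq j$, so $d(s_i,t_i) \geq 2^j$. But $C$ has diameter less than $2^j$, so the partner of $v$ cannot lie in $C$; hence $C \in \delta(\{s_i,t_i\})$ and $f(C) = 1$. This gives $|\Scal_j| \leq \sum_{C \in \C_j(T)} f(C)$ and therefore
\[
c(H) \;\leq\; \sum_j 2^{j+1} \sum_{C \in \C_j(T)} f(C) \;=\; 4 \sum_j \sum_{C \in \C_j(T)} 2^{j-1} f(C).
\]

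It remains to identify $\sum_j \sum_{C \in \C_j(T)} 2^{j-1} f(C)$ with $\OPT(T)$. On a tree, the unique way to satisfy a Steiner forest requirement $(s_i,t_i)$ is to buy every edge on the $s_i$--$t_i$ path. Hence a level-$j$ edge $e$ (of length $2^{j-1}$) lies in the optimal tree solution iff its corresponding cut $C_e$ separates at least one demand pair, i.e.\ $f(C_e) = 1$. Summing contributions $2^{j-1} f(C_e)$ over all edges of $T$ yields exactly $\OPT(T)$, completing the proof.

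The argument is essentially a reduction to the counting lemma already proved, so I do not expect any real obstacle; the only thing to be careful about is the implication ``$C$ contains a class-$\geq j$ terminal $\Rightarrow f(C) = 1$,'' which uses both the diameter bound on $C$ and the precise way class is defined via $\lfloor \log d(s_i,t_i)\rfloor$.
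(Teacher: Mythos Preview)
Your proof is correct and essentially identical to the paper's. The only cosmetic difference is that the paper takes $\Scal_j = \{C \in \C_j(T) : C \cap X_j \neq \emptyset\}$ rather than $\{C \cap X_j : \ldots\}$, but these collections are in bijection and both satisfy the hypotheses of Lemma~\ref{lem:SF:sep}, so the counting and the $f(C)=1$ argument go through verbatim.
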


\begin{proof}
  Let $T$ be a HST embedding. Define $\Scal_j = \{C \in \C_j(T) : C
  \cap X_j \neq \emptyset\}$, i.e. $\Scal_j$ is the collection of
  level-$j$ cuts that contain a terminal of class at least $j$. Since
  $\Scal_j$ satisfies the conditions of Lemma \ref{lem:SF:sep}, we
  have that \[c(H) \leq \sum_j 2^{j+1}|\Scal_j|.\]  

  Next, we lower bound $\OPT(T)$ in terms of its hierarchical
  decomposition $\C(T)$. 
  For any level-$j$ cut $C_e \in \C_j(T)$,
  if $(s_i,t_i) \in \delta(C_e)$ then $e$ lies on the $(s_i,t_i)$ path
  in $T$ so the optimal solution on $T$ buys $e$. Thus, we have that
  \[\OPT(T) = \sum_j \sum_{C \in \C_j(T)} 2^{j-1}  f(C).\] 


  Since $c(H) \leq \sum_j 2^{j+1}|\Scal_j|$ and $\C_j(T) \supseteq
  \Scal_j$, it suffices to prove that $f(C) = 1$ for $C \in
  \Scal_j$. Fix a cut $C \in \Scal_j$. By definition of $\Scal_j$,
  there exists a terminal $s_i \in C$ with $\class(s_i) \geq j$. Since
  $\class(s_i) \geq j$, we have that $d(s_i,t_i) \geq 2^j$ and so $t_i
  \notin C$ as the diameter of a level-$j$ cut is less than
  $2^j$. Thus, $C$ separates the terminal pair $(s_i,t_i)$ and so
  $f(C) = 1$, as desired. Now we have $\sum_{C \in \C_j(T)} f(C) \geq
  |\Scal_j|$ for each level $j$ and this completes the proof of the
  lemma.
\end{proof}

Observe that this lemma together with Corollary \ref{cor:main} implies
that Algorithm \ref{alg:SF} is $O(\log k)$-competitive for the online
Steiner forest problem. Lemma \ref{lem:SF:cost} says that the cost of
Algorithm \ref{alg:SF} is at most $O(1)\OPT(T)$ for any HST embedding
$T$ of $(X,d)$. Furthermore, Corollary \ref{cor:main} implies that
there exists a HST embedding $T^*$ such that $\OPT(T^*) \leq O(\log
k)\OPT$. Thus, Algorithm \ref{alg:SF} is $O(\log k)$-competitive for
the online Steiner forest problem.

\subsection{From Steiner Forest to Steiner Network}
\label{sec:SN}

We now state our Steiner network algorithm formally and show that it
is $O(\log k)$-competitive.
\paragraph{Algorithm.}
For ease of exposition, we first assume that we are given
the maximum requirement $\Rmax$ and remove this assumption later
on. 
We run $\lfloor \log \Rmax \rfloor$ instantiations of Algorithm
\ref{alg:SF} (the Berman-Coulston Steiner forest algorithm); when we
receive a terminal pair with requirement $R_i \in [2^\ell, 2^{\ell +
  1})$, we pass the pair to the $\ell$-th instantiation and buy
$2^{\ell+1}$ copies of each edge bought by that instantiation. The
assumption that we are given $\Rmax$ can be removed by starting a new
instantiation of Algorithm \ref{alg:SF} when we receive a terminal
pair whose requirement is higher than all previous requirements.

\paragraph{Analysis.} We maintain a feasible solution so it remains to
bound its cost.  Let $X$ be the set of terminals that arrived and
$H_\ell$ be the final subgraph of the $\ell$-th instantiation of
Algorithm \ref{alg:SF}. The cost of our solution is $\sum_\ell
2^{\ell+1}c(H_\ell)$. We now show that this is at most a constant
times the cost of the optimal solution on any HST embedding of the
terminals.

\begin{lemma}
  \label{lem:SN:cost}
  $\sum_\ell 2^{\ell+1}c(H_\ell) \leq O(1)\OPT(T)$ for any
  HST embedding $T$ of $(X,d)$.
\end{lemma}

\begin{proof}
  Fix a HST embedding $T$ of $(X,d)$. Define the function $f : 2^X
  \rightarrow \mathbb{N}$ as follows: for $S \subseteq X$, we have
  $f(S) = \max_{i : (s_i,t_i) \in \delta(S)} R_i$.  For any level-$j$
  cut $C_e \in \C_j(T)$, if $(s_i,t_i) \in \delta(C_e)$ then $e$ lies
  on the $(s_i,t_i)$ path in $T$ so the optimal solution on $T$ buys
  $f(C_e)$ copies of $e$. Summing over all cuts of the hierarchical
  decomposition gives us 
  \[\OPT(T) = \sum_j \sum_{C \in \C_j(T)} 2^{j-1} f(C).\]
  
  For each $0 \leq \ell \leq \lfloor \log \Rmax \rfloor$, define the
  subset of terminal pairs $D_\ell = \{(s_i, t_i) : R_i \in [2^\ell,
  2^{\ell + 1})\}$ and the function $f_\ell :2^X \rightarrow \{0,1\}$
  as follows: $f_\ell(S) = 1$ if $S \subseteq X$ separates a terminal
  pair of $D_\ell$. 
  Since $H_\ell$ is the output of
  Algorithm \ref{alg:SF} when run on the subsequence of terminal pairs $D_\ell$, Lemma
  \ref{lem:SF:cost} implies that
  \begin{align*}\sum_\ell 2^{\ell+1}c(H_\ell) 
    &\leq 4 \sum_\ell 2^{\ell+1}
    \left(\sum_j \sum_{C \in \C_j(T)} 2^{j-1} f_\ell(C)\right) \\
    &= 4\sum_j
    \sum_{C \in \C_j(T)} 2^{j-1}\left(\sum_\ell
      2^{\ell+1}f_\ell(C)\right).
  \end{align*}

  Fix a cut $C$. Since each terminal pair $(s_i,t_i) \in D_\ell$ has
  requirement $R_i \in [2^\ell, 2^{\ell+1})$ and $f_\ell(C) = 1$ if
  $C$ separates a terminal pair of $D_\ell$, we have $\sum_\ell
  2^{\ell + 1} f_\ell(C) \leq 4 \max_{i : (s_i,t_i) \in \delta(C)} R_i
  = 4f(C)$. Thus, we have
  \begin{align*}
    4\sum_j \sum_{C \in \C_j(T)}
    2^{j-1}\left(\sum_\ell 2^{\ell+1}f_\ell(C)\right) 
    \leq 16\sum_j \sum_{C \in \C_j(T)} 2^{j-1} f(C) 
    = 16 \OPT(T).
  \end{align*}
\end{proof}

By Corollary \ref{cor:main}, there exists a HST embedding $T^*$ such
that $\OPT(T^*) \leq O(\log k)\OPT$. So Lemma \ref{lem:SN:cost}
implies that the cost of our algorithm is at most $O(\log
k)\OPT$. Thus, our algorithm is $O(\log k)$-competitive for Steiner
network and this proves Theorem \ref{thm:proper}.\footnote{We remark
  that one can apply the usual dual-fitting analysis, but the upper
  bound it gives is proportional to the number of relevant distance
  scales. This is $O(\log (k\Rmax))$ for this problem and can
  deteriorate to $O(k)$ since $\Rmax$ can be as large as $2^k$.}

\section{Rent-or-Buy}
\label{sec:ROB}

Next, we consider the rent-or-buy problem and prove Theorem
\ref{thm:MROB}. We illustrate the key ideas by proving the
single-source version of the theorem in Section \ref{sec:SROB}. Then
we extend these ideas to the multi-source setting in Section
\ref{sec:MROB}. 

\subsection{Single-Source Rent-or-Buy}
\label{sec:SROB}
We prove the following theorem.

\begin{theorem}
  \label{thm:SROB}
  There exists a $O(\log k)$-competitive algorithm for the online
  single-source rent-or-buy problem.
\end{theorem}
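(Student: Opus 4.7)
The plan is to design a deterministic greedy algorithm in the spirit of online ski-rental, and then bound its cost by $O(1)\OPT(T)$ for every HST embedding $T$ of the arrived terminals; Corollary \ref{cor:main} then yields the desired $O(\log k)$ bound. When terminal $i$ arrives, let $a_i$ denote the distance from $i$ to the nearest vertex of $H \cup \{r\}$, and let $z$ attain this distance. Call a previously-arrived rent terminal a \emph{witness} for $i$ if it lies within some constant times $a_i$ of $i$ and its own rent cost was at least a constant fraction of $a_i$. If $i$ has at least $M$ witnesses, buy the length-$a_i$ path from $i$ to $z$ and declare $i$ a \emph{buy} terminal; otherwise, rent this path as $Q_i$ and declare $i$ a \emph{rent} terminal. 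A crucial structural observation is that the bought subgraph $H$ is exactly the greedy Steiner tree produced by Section \ref{sec:steinertree} on the subsequence of buy terminals together with $r$, so Lemma \ref{lem:r-sep} applies to the buy terminals.

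The first step of the analysis is to show that the total buy cost is at most a constant times the total rent cost. The buy cost $M \cdot a_i$ of each buy terminal is charged uniformly to its $\geq M$ witnesses, each of whom paid rent cost $\Omega(a_i)$. The separation property of Lemma \ref{lem:r-sep} applied to the buy terminals ensures that buy terminals are well-separated relative to their $a_i$-values, so their witness neighborhoods are essentially disjoint and no witness is charged by two different buy terminals. Hence it suffices to bound the total rent cost by $O(1)\OPT(T)$.

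For the second step, assign each rent terminal $i$ a cost share $2^{\class(i)+1} \geq a_i$ with $\class(i) = \lfloor \log a_i \rfloor$, let $Z_j$ be the class-$j$ rent terminals, and charge each $i \in Z_j$ to the unique level-$j$ cut $C \in \C_j(T)$ containing it. The key lemma is that each such cut receives total charge at most $O(2^j \min\{|C|, M\})$. The $|C|$ bound is trivial. For the $M$ bound, use $\diam(C) < 2^j$: once $\Omega(M)$ earlier class-$\geq j$ terminals have accumulated in $C$, the next arrival $i^* \in C$ sees all of them as witnesses (they lie within the witness radius for $a_{i^*}$ and their own rent costs are comparable), so $i^*$ must be a buy terminal, not a rent terminal; moreover, every subsequent arrival $i' \in C$ then has $a_{i'} \leq d(i', i^*) < 2^j$, so $\class(i') < j$ and $i'$ is not charged to $C$. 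Finally, to lower bound $\OPT(T)$, note that for every level-$j$ cut $C$ that separates some terminal from $r$, the optimum on $T$ must pay for traffic across the HST edge of length $2^{j-1}$ above $C$ either by buying (cost $M \cdot 2^{j-1}$) or by renting it individually for each terminal in $C$ (cost $|C| \cdot 2^{j-1}$), contributing at least $2^{j-1}\min\{|C|, M\}$. Matching this against the charge bound completes the proof.

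The main obstacle will be calibrating the constants in Step 2: the witness radius, the rent-cost threshold, and the class definition must all line up so that (a) $\Omega(M)$ class-$\geq j$ rent terminals inside a diameter-$2^j$ cut genuinely force the next arrival into the buy regime, and (b) post-buy arrivals inside that cut provably drop at least one class and hence escape the charge on $C$. Once these constants are pinned down, the remaining steps are bookkeeping.
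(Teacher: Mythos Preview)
Your proposal is correct and follows essentially the same approach as the paper: a ski-rental-style algorithm with a witness threshold of $M$, the buy cost charged to the rent cost via disjointness of witness sets (using Lemma~\ref{lem:r-sep} on the subsequence of buy terminals), and the rent cost charged to HST cuts with the per-cut bound $\min\{|C|,M\}$ matched against the lower bound $\OPT(T) \geq \sum_{j}\sum_{C \in \C_j(T): r\notin C} 2^{j-1}\min\{M,|C|\}$. The paper's specific calibration is to take witnesses to be rent terminals of \emph{exactly} the same class $j$ within radius $2^{j-1}$, and to charge class-$j$ rent terminals to level-$(j-1)$ cuts rather than level-$j$ cuts; with those choices the contradiction argument on the last-arriving terminal in $R_{j+1}\cap C$ goes through directly, which is slightly cleaner than your ``post-buy arrivals drop a class'' step and also sidesteps the cross-class witness-disjointness issue (your lower-bound-only threshold on witness rent cost would otherwise allow a class-$j$ and a class-$j'$ buy terminal to share a witness).
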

 
We recall the problem statement. The algorithm is initially given a
root terminal $r$ and a parameter $M \geq 0$. The algorithm maintains a
subgraph $H$ of bought edges. When a terminal $i$ arrives, the
algorithm buys zero or more edges and rents edges $Q_i$ such that $H
\cup Q_i$ connects $i$ and $r$. The total cost of the algorithm is
$Mc(H) + \sum_i c(Q_i)$.  We call $Mc(H)$ the \emph{buy cost} and
$c(Q_i)$ the \emph{rent cost} of terminal $i$.

\paragraph{Intuition.}
Our algorithm maintains that the subgraph $H$ of bought edges is a
connected subgraph containing the root; essentially it either rents or
buys the shortest edge from the current terminal to $H$. Roughly
speaking, our algorithm buys if there are $M$ terminals nearby with
sufficiently large rent costs, and rents otherwise. This allows us to
charge the buy cost to the total rent cost and we then charge the
total rent cost to edges of any HST embedding $T$ by showing that
there are few terminals with large rent cost in a small neighborhood.

\paragraph{Algorithm.}
Algorithm \ref{alg:SROB} maintains a set of \emph{buy terminals} $Z$
($Z$ includes $r$) which are connected by $H$ to the root; all other
terminals are called \emph{rent terminals} and denoted by $R$. When a
terminal $i$ arrives, let $z \in Z$ be the closest buy terminal and
define $a_i = d(i,z)$. 
As in the previous section, we define $\class(i) = j$ if $a_i \in
[2^j, 2^{j+1})$. The algorithm considers the set of class-$j$ rent
terminals $R_j$ of distance less than $2^{j-1}$ to $i$. Call this
the \emph{witness set} of $i$ and denote it by $W(i)$. The algorithm 
buys the edge $(i,z)$ if $|W(i)| \geq M$ and rents it
otherwise. Terminal $i$ becomes a buy terminal in the former case and
a rent terminal in the latter case.


\begin{algorithm}
   \caption{Algorithm for Online Single-Source Rent-or-Buy}
  \begin{algorithmic}[1]
    \label{alg:SROB}
    \STATE $Z \leftarrow \{r\}; R_j \leftarrow
    \emptyset; H \leftarrow \emptyset$
    \WHILE {terminal $i$ arrives}
    \STATE Let $z$ be closest terminal in $Z$ to $i$ 
     and set $j = \lfloor \log d(i,z) \rfloor$
    \STATE $W(i) \leftarrow \{i' \in R_j : d(i,i') < 2^{j-1}\}$
    \IF {$|W(i)| \geq M$}
    \STATE Add $i$ to $Z$ and buy $(i,z)$, i.e. $H \leftarrow H \cup \{(i,z)\}$
    \ELSE
    \STATE Add $i$ to $R_j$ and rent $(i,z)$, i.e. $Q_i \leftarrow \{(i,z)\}$ 
    \ENDIF
    \ENDWHILE
  \end{algorithmic}
\end{algorithm}
 
\paragraph{Analysis.}
Let $X$ be the set of all terminals.  For each buy terminal $z \in Z$,
the algorithm incurs a buy cost of $Ma_z$; for each rent terminal $i
\in R$, the algorithm incurs a rent cost of $a_i$. Thus, the total
cost of the algorithm is $\sum_{z \in Z} Ma_z + \sum_{i \in R}
a_i$. Our goal in the analysis is to show that this is at most a
constant times $\OPT(T)$ for any HST embedding $T$ of $(X,d)$.


First, we define cost shares for each terminal. For each rent terminal
$i \in R_j$, we define $i$'s cost share to be $2^{j+1}$, i.e. $a_i$
rounded up to the next power of $2$. The total
cost share is $\sum_j 2^{j+1}|R_j|$. We now show that the cost of the
algorithm is at most twice the total cost share.

\begin{lemma}
  \label{lem:SROB:share}
  $\sum_{z \in Z} Ma_z + \sum_{i \in R} a_i \leq 2 \sum_j 2^{j+1}|R_j|$.
\end{lemma}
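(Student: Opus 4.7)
The plan is to split the bound into two parts: show that the rent cost is at most $\sum_j 2^{j+1}|R_j|$ (essentially by definition of cost share), and show that the buy cost is also at most $\sum_j 2^{j+1}|R_j|$ via a charging argument from buy terminals to their witnesses. Adding the two bounds yields the stated inequality.

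The rent part is immediate: for any $i \in R_j$ we have $a_i < 2^{j+1}$, so $\sum_{i \in R}a_i = \sum_j \sum_{i \in R_j}a_i < \sum_j 2^{j+1}|R_j|$. The work lies in the buy part. For a buy terminal $z$ of class $j$ (i.e.\ $a_z \in [2^j,2^{j+1})$), the algorithm committed to buying only because $|W(z)| \geq M$, and each element of $W(z)$ lies in $R_j$ and thus carries cost share exactly $2^{j+1}$. Therefore
\[
  M a_z \;<\; M \cdot 2^{j+1} \;\leq\; |W(z)| \cdot 2^{j+1} \;=\; \sum_{i' \in W(z)} 2^{j+1},
\]
so the buy cost of $z$ can be charged entirely to its witnesses, each paying $2^{j+1}$. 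Summing over all class-$j$ buy terminals yields $\sum_{z \in Z_j} Ma_z \leq 2^{j+1}\sum_{z \in Z_j}|W(z)|$, which reduces to the question of whether $\sum_{z \in Z_j}|W(z)| \leq |R_j|$.

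The key step, which I expect to be the main obstacle, is showing that the witness sets $\{W(z) : z \in Z_j\}$ are pairwise disjoint (they are automatically subsets of $R_j$ by construction). This reduces to a separation property analogous to Lemma~\ref{lem:r-sep}: any two buy terminals $z, z' \in Z_j$ must satisfy $d(z,z') \geq 2^j$. Indeed, if $z'$ arrived after $z$, then $z$ was already in $Z$ at the time $z'$ arrived, so $a_{z'} \leq d(z,z')$; combined with $a_{z'} \geq 2^j$ this gives the separation. Now suppose for contradiction that some $i \in W(z) \cap W(z')$ exists; then by definition of witness sets $d(z,i), d(z',i) < 2^{j-1}$, giving $d(z,z') < 2^j$ by the triangle inequality, contradicting the separation. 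Hence the witness sets are disjoint, $\sum_{z \in Z_j}|W(z)| \leq |R_j|$, and the total buy cost is bounded by $\sum_j 2^{j+1}|R_j|$.

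Combining the two bounds produces $\sum_{z \in Z}Ma_z + \sum_{i \in R}a_i \leq 2\sum_j 2^{j+1}|R_j|$, proving the lemma. One small technical point to check along the way is the treatment of the root $r$, which sits in $Z$ but has no meaningful $a_r$; this is harmless since $r$ contributes zero buy cost and need not participate in the charging.
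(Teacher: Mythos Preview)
Your proof is correct and follows essentially the same approach as the paper: bound rent cost directly by the cost shares, then bound buy cost by showing $M|Z_j| \leq |R_j|$ via disjointness of the witness sets, which in turn follows from the separation $d(z,z') \geq 2^j$ for $z,z' \in Z_j$. The only cosmetic difference is that the paper obtains the separation by noting that $H$ is exactly the output of the greedy Steiner tree algorithm on the subsequence of buy terminals and invoking Lemma~\ref{lem:r-sep}, whereas you reprove that one-line argument inline; the content is identical.
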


\begin{proof}
  Let $Z_j \subseteq Z$ be the set of class-$j$ buy terminals. We have
  $\sum_{i \in R} a_i \leq \sum_j 2^{j+1}|R_j|$ and $\sum_{z \in Z}
  Ma_z \leq \sum_j M2^{j+1}|Z_j|$. We now show that $M|Z_j| \leq
  |R_j|$ for each class $j$.
  
  Fix a class $j$. The witness set $W(z)$ of $z \in Z_j$ satisfies the
  following properties: (1) $|W(z)| \geq M$; (2) $W(z) \subseteq R_j$; (3)
  $d(i,z) < 2^{j-1}$ for $i \in W(z)$. The first implies that
  $M|Z_j| \leq \sum_{z \in Z_j} |W(z)|$. We claim that $d(z,z') \geq
  2^j$ for $z,z'\in Z_j$. This completes the proof since together with
  the second and third properties, we have that the witness sets of
  $Z_j$ are disjoint subsets of $R_j$ and so $\sum_{z \in Z_j} |W(z)|
  \leq |R_j|$.

  Now we prove the claim. Observe that $H$ is the subgraph produced by
  the greedy online Steiner tree algorithm if it were run on the
  subsequence of buy terminals $Z$ and for each $z \in Z$, we have
  that $a_z$ is exactly the distance from $z$ to the nearest
  previously-arrived buy terminal. Thus, we can apply Lemma
  \ref{lem:r-sep} and get that $d(z,z') \geq 2^j$ for any $z,z' \in
  Z_j$, proving the claim. Putting all of the above together, we get
  $\sum_{z \in Z} Ma_z \leq \sum_j 2^{j+1}|R_j|$. This finishes the
  proof of the lemma.
\end{proof}

Next, we show that the total cost share is at most a constant times
the cost of the optimal solution on any HST embedding of the
terminals. 






\begin{lemma}
  \label{lem:SROB:cost}
  $\sum_j 2^{j+1}|R_j| \leq O(1)\OPT(T)$ for any HST embedding $T$ of
  $(X,d)$. 
\end{lemma}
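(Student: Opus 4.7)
}

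The plan is to mimic the charging scheme from Section \ref{sec:steinertree}, but accounting for the fact that on a HST the single-source rent-or-buy cost of each edge $e$ is the minimum of (a) paying $M\cdot\text{length}(e)$ once, and (b) paying $\text{length}(e)$ for each terminal crossing $e$. First I would express $\OPT(T)$ in terms of the hierarchical decomposition: for a HST embedding $T$ of $(X,d)$ rooted so that the root terminal $r$ is a leaf, each edge $e$ at level $\ell$ defines a cut $C_e\in\C_\ell(T)$; if $r\notin C_e$, the terminals of $C_e$ must reach $r$ across $e$, contributing $2^{\ell-1}\min(M,|C_e\cap X|)$ to the tree optimum, and if $r\in C_e$ the edge contributes nothing. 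Hence
\[
\OPT(T)\;=\;\sum_\ell \sum_{C\in\C_\ell(T),\,r\notin C} 2^{\ell-1}\min\bigl(M,\,|C\cap X|\bigr).
\]

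Next I would charge the cost share $2^{j+1}$ of each class-$j$ rent terminal $i\in R_j$ to the unique level-$(j{-}1)$ cut $C\in\C_{j-1}(T)$ containing it. Two observations make this work. First, such a $C$ never contains $r$: by definition of $R_j$ we have $a_i\ge 2^j$, and since $r\in Z$ is a buy terminal, $d(i,r)\ge a_i\ge 2^j>2^{j-1}>\diam(C)$, so $r$ lies in a different level-$(j{-}1)$ cut. Hence every charge lands on a cut that contributes to $\OPT(T)$. Second, I claim $|R_j\cap C|\le M$ for any $C\in\C_{j-1}(T)$. Indeed, since $\diam(C)<2^{j-1}$, any two class-$j$ rent terminals in $C$ are mutual witnesses in the sense of the algorithm; so if $i_1,\ldots,i_\ell\in R_j\cap C$ in order of arrival, then when $i_\ell$ arrived the previous $\ell-1$ were already in $R_j$ and all at distance less than $2^{j-1}$ from $i_\ell$, giving $|W(i_\ell)|\ge \ell-1$. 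Since $i_\ell$ became a rent terminal we must have $\ell-1<M$, i.e. $\ell\le M$. Combined with the trivial bound $|R_j\cap C|\le|C\cap X|$, this yields $|R_j\cap C|\le \min(M,|C\cap X|)$.

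Putting the two observations together,
\[
\sum_j 2^{j+1}|R_j|
\;=\;\sum_j 2^{j+1}\!\!\sum_{C\in\C_{j-1}(T),\,r\notin C}\!\!|R_j\cap C|
\;\le\; 8\sum_j\!\!\sum_{C\in\C_{j-1}(T),\,r\notin C}\!\!2^{j-2}\min(M,|C\cap X|)
\;\le\; 8\,\OPT(T),
\]
giving the desired $O(1)$ bound. The only subtlety — and what I expect to be the main point to verify carefully — is the ``mutual witnesses'' claim that underlies $|R_j\cap C|\le M$: specifically that the previously-arrived class-$j$ rent terminals in $C$ really are still in $R_j$ (not promoted to $Z$) when $i_\ell$ arrives, and that the distance threshold $2^{j-1}$ in the definition of $W(\cdot)$ matches the diameter bound of a level-$(j{-}1)$ cut. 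Both hold by construction, which is precisely why the charge must be routed to level-$(j{-}1)$ rather than level-$j$ cuts.
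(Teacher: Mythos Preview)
Your proposal is correct and follows essentially the same approach as the paper: lower-bound $\OPT(T)$ by $\sum_j\sum_{C\in\C_j(T):r\notin C}2^{j-1}\min\{M,|C|\}$, charge each $i\in R_j$ to its level-$(j{-}1)$ cut, and use the witness-set condition together with the diameter bound to show $|R_j\cap C|\le\min\{M,|C|\}$, arriving at the same constant $8$. The paper phrases the key bound as a proof by contradiction rather than your direct counting, but the argument is otherwise identical.
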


\begin{proof}
  Let $T$ be a HST embedding of $(X,d)$. 
  We begin by lower bounding $\OPT(T)$ in terms of $T$'s cuts.  Fix a
  level-$j$ cut $C_e \in \C_j(T)$. If $r \notin C_e$, then $e$ lies on
  the path between $i$ and $r$ for each terminal $i \in C_e$. Since $e
  \in E_j(T)$ and has length $2^{j-1}$, the optimal solution on $T$
  either rents $e$ for each terminal in $C$ at a cost of $2^{j-1}|C|$
  or buys it at a cost of $2^{j-1}M$. Summing over all cuts of the
  hierarchical decomposition gives us the following lower bound
\[\OPT(T) \geq \sum_j \sum_{C \in \C_j(T) : r \notin C}
  2^{j-1}\min\{M,|C|\}.\]

  For each rent terminal $i \in R_j$, we charge
  $2^{j+1}$ to the unique level-$(j-1)$ cut\footnote{
    We can extend $T$ with an additional level of terminal singletons
    to accomodate charging against level $j = -1$. This only increases
    $\OPT(T)$ by at most a constant.
  } containing $i$. Each
  level-$j$ cut $C \in \C_j(T)$ is charged $2^{j+2} |R_{j+1} \cap C|$
  since it receives a charge of $2^{j+2}$ for each class-$(j+1)$ rent
  terminal in it. So overall, we have \[\sum_j 2^{j+1}|R_j| = \sum_j
  \sum_{C \in \C_j(T)} 2^{j+2} |R_{j+1} \cap C|.\]

  It remains to prove the following claim: for each level-$j$ cut $C
  \in \C_j(T)$, we have $|R_{j+1} \cap C| = 0$ if $r \in C$ and
  $|R_{j+1} \cap C| \leq \min\{M,|C|\}$ if $r \notin C$. Suppose $r
  \in C$. Since $a_i < d(i,r) < 2^{j}$ for all $i \in C$, there cannot
  be any class-($j+1$) terminal in $C$ and so $R_{j+1} \cap C =
  \emptyset$. Now consider the case $r \notin C$. Since $|R_{j+1} \cap
  C| \leq |C|$, it suffices to prove that $|R_{j+1} \cap C| \leq
  M$. Suppose, towards a contradiction, that $|R_{j+1} \cap C| > M$
  and let $i$ be the last-arriving terminal of $R_{j+1} \cap C$.  The
  terminals of $R_{j+1} \cap C \setminus \{i\}$ arrive before $i$, are
  each of distance less than $2^j$ from $i$ (diameter of $C$ is less
  than $2^j$) and of the same class as $i$, so they are part of $i$'s
  witness set $W(i)$. Since $|R_{j+1} \cap C| > M$, we have that
  $|W(i)| \geq |R_{j+1} \cap C \setminus \{i\}| \geq M$. Thus, $i$
  would have been a buy terminal but this contradicts the assumption
  that $i \in R_{j+1}$. Therefore, we have $|R_{j+1} \cap C| \leq M$
  as desired. This completes the proof of the claim and so we get
  $\sum_j 2^{j+1}|R_j| \leq 8\OPT(T)$.
\end{proof}

Now we put all of the above together. Lemmas \ref{lem:SROB:share} and
\ref{lem:SROB:cost} imply that the cost of Algorithm \ref{alg:SROB} is
at most $O(1)\OPT(T)$ for any HST embedding $T$ of
$(X,d)$. Furthermore, by Corollary
\ref{cor:main}, there exists a HST embedding $T^*$ such
that $\OPT(T^*) \leq O(\log k)\OPT$. 
Thus the algorithm is $O(\log k)$-competitive for single-source
rent-or-buy and this proves Theorem \ref{thm:SROB}.

\subsection{Multi-Source Rent-or-Buy}
\label{sec:MROB}


We move on to the multi-source setting. 
Recall the problem statement. The algorithm is given a parameter $M
\geq 0$ initially, and it maintains a subgraph $H$ of bought edges
online. When a terminal pair $(s_i, t_i)$ arrives, the algorithm buys
zero or more edges and rents edges $Q_i$ such that $H \cup Q_i$
connects $s_i$ and $t_i$.  The total cost of the algorithm is
$Mc(H) + \sum_i c(Q_i)$. We call $Mc(H)$ the \emph{buy cost} and
$c(Q_i)$ the \emph{rent cost} of terminal pair $(s_i,t_i)$. 

\paragraph{Intuition.}
The high-level idea is similar to the single-source rent-or-buy
algorithm in Section \ref{sec:SROB}.  For each terminal pair
$(s_i,t_i)$, our algorithm will either rent the edge $(s_i,t_i)$ or
buy edges such that $s_i$ and $t_i$ are connected in $H$. Roughly
speaking, the algorithm buys edges if there are at least $M$ endpoints
of terminal pairs with sufficiently large rent costs near each of
$s_i$ and $t_i$, and rents the edge $(s_i,t_i)$ otherwise. The main
difference with the single-source case is that $H$ is a Steiner forest
connecting a subset of the terminal pairs. Thus, we use the
Berman-Coulston algorithm for online Steiner forest
\cite{Berman:1997:OAS:258533.258618} (described in Section
\ref{sec:forest}) to determine $H$.



\paragraph{Algorithm.}
For each terminal pair $(s_i,t_i)$, we define $a_i = d(s_i,t_i)$ and
classify $s_i$ and $t_i$ based on $a_i$: $\class(s_i) = \class(t_i) =
j$ if $d(s_i,t_i) \in [2^j, 2^{j+1})$. As terminal pairs arrive
online, the algorithm designates some of the terminals as \emph{rent
  terminals}. When $(s_i,t_i)$ arrives, the algorithm considers the
sets $W(s_i)$ and $W(t_i)$ of class-$j$ rent terminals that are of
distance less than $2^{j-2}$ to $s_i$ and $t_i$, respectively. We call
these the \emph{witness sets} of $s_i$ and $t_i$. If $|W(s_i)| \geq M$
and $|W(t_i)| \geq M$, the algorithm buys edges such that $s_i$ and
$t_i$ are connected in $H$; otherwise, it rents the edge $(s_i,t_i)$
and designates exactly one of $s_i$ or $t_i$ to be a rent terminal. We
say that $(s_i,t_i)$ is a \emph{buy pair} if the edge is bought and a
\emph{rent pair} otherwise.

In the following, denote by $R_j$ the set of class-$j$ rent terminals
that have arrived so far.
\begin{algorithm}
  \caption{Algorithm for Online Multi-Source Rent-or-Buy}
  \begin{algorithmic}[1]
    \label{alg:MROB}
    \STATE $H \leftarrow \emptyset; R_j \leftarrow \emptyset$
    \WHILE {request $(s_i, t_i)$ arrives}
    \STATE Set $j = \class(s_i) = \class(t_i) = \lfloor \log
    d(s_i,t_i) \rfloor$ 
    \STATE $W(s_i) \leftarrow \{v \in R_j : d(s_i,v) < 2^{j-2}\}$
    \STATE $W(t_i) \leftarrow \{v \in R_j : d(t_i,v) < 2^{j-2}\}$
    \IF { $|W(s_i)| < M$}
    \STATE Rent $(s_i,t_i)$, i.e. $Q_i \leftarrow \{(s_i,t_i)\}$ 
    \STATE Add $s_i$ to $R_j$ 
    \ELSIF {$|W(t_i)| < M$}
    \STATE Rent $(s_i,t_i)$, i.e. $Q_i \leftarrow \{(s_i,t_i)\}$  
    \STATE Add $t_i$ to $R_j$. 
    \ELSE
    \STATE Pass $(s_i,t_i)$ to Algorithm \ref{alg:SF} and buy each
    edge that it buys; add bought edges to $H$
    \ENDIF
    \ENDWHILE
  \end{algorithmic}
\end{algorithm}

\paragraph{Analysis.}
Let $X$ be the set of terminals that arrived, $R$ be the set of all
rent terminals and $Z$ be the set of terminals that were passed to
Algorithm \ref{alg:SF} (i.e. $Z$ is the set of endpoints of buy
pairs). We call $Z$ the set of \emph{buy terminals}. Note that $R$
contains exactly one endpoint of each rent pair; we rename the
terminals so that if $(s_i,t_i)$ is a rent pair, then $t_i \in R$. The
cost of the algorithm is $Mc(H) + \sum_{t_i \in R} d(s_i,t_i)$. The
analysis proceeds by charging the cost of the algorithm against the
cost of the optimal solution on any HST embedding of the terminals.

We first define cost shares for each terminal. For each $t_i \in R_j$,
we define $t_i$'s cost share to be $2^{j+1}$. The total cost share is
$\sum_j 2^{j+1}|R_j|$. We now show that the cost of the algorithm is
at most twice the total cost share.


\begin{lemma}
  \label{lem:MROB:share}
  $Mc(H) + \sum_{t_i \in R} d(s_i,t_i) \leq 2 \sum_j 2^{j+1}|R_j|$.
\end{lemma}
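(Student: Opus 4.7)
The plan is to mirror the proof of Lemma \ref{lem:SROB:share} by splitting the algorithm's cost into its rent and buy components and bounding each against $\sum_j 2^{j+1}|R_j|$. The rent side is immediate: under the renaming convention each rent pair $(s_i,t_i)$ contributes one endpoint to $R_j$ with $j=\lfloor\log d(s_i,t_i)\rfloor$, so $d(s_i,t_i)<2^{j+1}$ and summing gives $\sum_{t_i\in R}d(s_i,t_i)\le \sum_j 2^{j+1}|R_j|$, which uses one half of the target budget.

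The main step is to show $Mc(H)\le \sum_j 2^{j+1}|R_j|$. The crucial observation is that $H$ is precisely the subgraph Algorithm \ref{alg:SF} produces on the subsequence of buy pairs, since Algorithm \ref{alg:MROB} forwards buy pairs unchanged and the SF class assigned inside Algorithm \ref{alg:SF} agrees with the MROB class $\lfloor\log d(s_i,t_i)\rfloor$. Lemma \ref{lem:SF:sep} thus gives $c(H)\le \sum_j 2^{j+1}|\Scal_j|$ for any family $\{\Scal_j\}$ of disjoint coverings of $X_j$ (the buy terminals of class at least $j$) by sets of diameter less than $2^j$, so the task reduces to constructing $\Scal_j$ with $M|\Scal_j|\le |R_j|$. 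My plan is to cluster the buy terminals around their class-$j$ witnesses: each class-$j$ buy pair $(s_i,t_i)$ carries witness sets $W(s_i),W(t_i)\subseteq R_j$ of size at least $M$ inside radius-$2^{j-2}$ balls around the endpoints, and the two sets are disjoint because $d(s_i,t_i)\ge 2^j>2\cdot 2^{j-2}$. Taking a maximal pairwise $2^{j-1}$-separated subset of class-$j$ buy-pair endpoints as cluster centers makes the surrounding radius-$2^{j-2}$ witness balls pairwise disjoint subsets of $R_j$, bounding the number of centers by $|R_j|/M$; every remaining class-$j$ buy terminal in $X_j$ is then assigned to its nearest center, producing clusters of diameter less than $2^j$.

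The hard part will be incorporating the buy terminals of higher class $j_v>j$ that also belong to $X_j$: their own witnesses live in $R_{j_v}$ rather than $R_j$, so they cannot be charged directly against $|R_j|/M$. The plan is to anchor such a terminal at the cluster center produced by the level-$j_v$ construction above and to amortize the resulting contribution across all levels $j\le j_v$ using the geometric-series bound $\sum_{j\le j_v}2^{j+1}\le 2^{j_v+2}$, which is a constant multiple of the $2^{j_v+1}$ available per class-$j_v$ rent terminal. This introduces only a constant overhead that can be absorbed into the factor of $2$ on the right-hand side of the lemma. Combining the rent-cost and buy-cost bounds then yields $Mc(H)+\sum_{t_i\in R}d(s_i,t_i)\le 2\sum_j 2^{j+1}|R_j|$.
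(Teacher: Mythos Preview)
Your rent-cost bound and the class-$j$ witness argument are correct, and you rightly flag the higher-class buy terminals as the hard part. But the amortization you sketch does not close. Lemma~\ref{lem:SF:sep} requires every set in $\Scal_j$ to have diameter less than $2^j$; ``anchoring'' a class-$j_v$ terminal at its level-$j_v$ cluster center produces a cluster of diameter $<2^{j_v}$, which violates this for $j<j_v$. If instead you refine to a $2^{j-1}$-net of the class-$j_v$ buy terminals at each level $j\le j_v$, that finer net can have far more points than the level-$j_v$ net, so the geometric-series bound $\sum_{j\le j_v}2^{j+1}\le 2^{j_v+2}$ no longer controls the total contribution. In short, you have not specified a construction of $\Scal_j$ that simultaneously meets the diameter constraint of Lemma~\ref{lem:SF:sep} and the cardinality bound needed for the charging. (Even if the amortization went through, the extra factor of $2$ on the buy side would give $3\sum_j 2^{j+1}|R_j|$ rather than the stated $2$.)

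The paper avoids any amortization over classes: it takes $Z'_j$ to be a maximal $2^{j-1}$-separated subset of \emph{all} buy terminals of class at least $j$ (not just class exactly $j$), assigns every terminal of $Z_j$ to its nearest point of $Z'_j$, and uses the resulting partition as $\Scal_j$. This covers $Z_j$ with diameter-$<2^j$ clusters in one stroke, so there is no separate higher-class case to handle, and Lemma~\ref{lem:SF:sep} gives $c(H)\le\sum_j 2^{j+1}|Z'_j|$ directly. The remaining step is to show $M|Z'_j|\le|R_j|$ via the witness sets of the net points; the paper asserts $W(z)\subseteq R_j$ and $d(i,z)<2^{j-2}$ for every $z\in Z'_j$, which is exactly the point you flagged, since a net point with $\class(z)>j$ has $W(z)\subseteq R_{\class(z)}$ at radius $2^{\class(z)-2}$. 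So the paper's route is structurally cleaner but is quick at the same place you got stuck.
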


\begin{proof}
  Since $t_i$'s cost share is at least $d(s_i,t_i)$, we have that
  $\sum_{t_i \in R} d(s_i,t_i) \leq \sum_j 2^{j+1}|R_j|$. Next, we
  bound $Mc(H)$. Let $Z_j \subseteq Z$ be the set of buy terminals
  with class at least $j$. Define $Z'_j \subseteq Z_j$ to be the
  maximal subset of $Z_j$ such that $d(u,v) \geq 2^{j-1}$ for all $u,v
  \in Z'_j$. We will show that $Mc(H) \leq \sum_j 2^{j+1}|R_j|$ in two
  steps: first we prove that $Mc(H) \leq \sum_j 2^{j+1}M|Z'_j|$ and then
  prove that $M|Z'_j| \leq |R_j|$ for each $j$.


  We partition $Z_j$ as follows: assign each $v \in Z_j$ to the
  closest terminal in $Z'_j$, breaking ties arbitrarily, and define
  $S_u$ to be the set of terminals assigned to $u$. Observe that the
  diameter of $S_u$ is less than $2^j$ for all $u \in Z'_j$. Define
  $\Scal_j = \{S_u\}_{u \in Z'_j}$. Since $\Scal_j$ satisfies the
  conditions of Lemma \ref{lem:SF:sep}, we get that $Mc(H) \leq
  \sum_j2^{j+1}M|\Scal_j| = \sum_j2^{j+1}M|Z'_j|$.
  
  Next we show that $M|Z'_j| \leq 2^{j+1}|R_j|$ for each $j$.  Fix
  $j$. The witness set $W(z)$ of $z \in Z'_j$ satisfies the following
  properties: (1) $|W(z)| \geq M$; (2) $W(z) \subseteq R_j$; (3) $d(i,z) <
  2^{j-2}$ for $i \in W(z)$. The first implies that $M|Z_j| \leq
  \sum_{z \in Z_j} |W(z)|$. We have $d(z,z') \geq 2^{j-1}$ for $z, z'
  \in Z'_j$ by definition, so with the second and third properties, we
  get that the witness sets of $Z'_j$ are disjoint subsets of $R_j$,
  and so $\sum_{z \in Z'_j} |W(z)| \leq |R_j|$. Putting all of the
  above together, we have $Mc(H) \leq \sum_j 2^{j+1} |R_j|$.
\end{proof}



Next, we show that we can charge the cost shares against the optimal
solution on any HST embedding $T$. 


\begin{lemma}
  \label{lem:MROB:C}
  $\sum_j 2^{j+1}|R_j| \leq O(1)\OPT(T)$ for all HST
  embeddings $T$ of $(X,d)$.
\end{lemma}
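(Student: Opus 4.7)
The plan is to follow the template of Lemma \ref{lem:SROB:cost}. Fix a HST embedding $T$ of $(X,d)$. I would first derive a lower bound on $\OPT(T)$ in terms of the cuts of $T$: for each level-$j$ cut $C \in \C_j(T)$ with associated edge of length $2^{j-1}$, define $f(C) = |\{i : (s_i,t_i) \in \delta(C)\}|$ to be the number of terminal pairs that cross $C$. The optimal tree solution must handle each crossing pair and can either rent the edge per pair or buy it once, so its contribution from $C$ is at least $2^{j-1}\min\{M, f(C)\}$, yielding
$$\OPT(T) \;\geq\; \sum_j \sum_{C \in \C_j(T)} 2^{j-1}\min\{M, f(C)\}.$$

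Next, I would charge each $t_i \in R_j$ its cost share $2^{j+1}$ to the unique level-$(j-2)$ cut containing it, analogous to the single-source analysis but shifted by one additional level because the algorithm's witness threshold is $2^{j-2}$ rather than $2^{j-1}$. (As in Lemma \ref{lem:SROB:cost}, one can pad $T$ with a few extra singleton levels so this is well-defined, at a constant-factor blowup in $\OPT(T)$.) After re-indexing, the entire lemma reduces to the key claim that for every level-$(j-2)$ cut $C \in \C_{j-2}(T)$ and every class $j$,
$$|R_j \cap C| \;\leq\; \min\{M, f(C)\}.$$

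The inequality $|R_j \cap C| \leq f(C)$ is routine: any $t_i \in R_j$ has $d(s_i, t_i) \geq 2^j > \diam(C)$, so $s_i \notin C$ and the pair $(s_i, t_i)$ crosses $C$, and distinct rent terminals yield distinct crossing pairs since each rent pair contributes at most one element to $R$. The main obstacle is $|R_j \cap C| \leq M$, which requires exploiting the witness rule of Algorithm \ref{alg:MROB}. The idea is to look at the last terminal $v$ added to $R_j \cap C$: the earlier members of $R_j \cap C$ are all class-$j$ rent terminals lying within distance $\diam(C) < 2^{j-2}$ of $v$, so they belong to $v$'s witness set $W(v)$ computed during $v$'s iteration. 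The new ingredient compared to the single-source case is the asymmetry that the rent rule designates only one of $s_i, t_i$ as the rent terminal; the resolution is the simple but essential observation that whichever endpoint is added to $R_j$ is added precisely because its own witness set has fewer than $M$ members --- in the first rent branch because $|W(s_i)| < M$, and in the second because $|W(t_i)| < M$. Hence $|R_j \cap C| - 1 \leq |W(v)| < M$, proving the claim. Combining the two parts of the claim with the $\OPT(T)$ lower bound gives $\sum_j 2^{j+1}|R_j| \leq 16\,\OPT(T)$, as required.
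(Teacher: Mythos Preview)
Your proposal is correct and follows essentially the same approach as the paper: the same lower bound on $\OPT(T)$ via $\min\{M,|D(C)|\}$, the same charging of $R_j$ to level-$(j-2)$ cuts, and the same two-part claim $|R_j\cap C|\le \min\{M,f(C)\}$ with the identical contradiction argument via the last-added rent terminal. Your explicit observation that whichever endpoint is placed in $R_j$ is precisely the one whose witness set had fewer than $M$ members is in fact a point the paper glosses over, so your write-up is slightly more careful there; the final constant $16$ matches.
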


\begin{proof}
  Let $T$ be a HST embedding of $(X,d)$. We begin by expressing
  $\OPT(T)$ in terms of $T$'s cuts. Define $D(S) = \{(s_i,t_i) :
  (s_i,t_i)\in\delta(S)\}$ for each vertex subset $S \subseteq X$.
  Consider a level-$j$ cut $C_e \in \C_j(T)$. By definition, $e \in
  E_j(T)$ and has length $2^{j-1}$. If $(s_i,t_i) \in \delta(C_e)$,
  then $e$ lies on the $(s_i,t_i)$ path in $T$. Thus, the optimal
  solution on $T$ either rents $e$ for each terminal pair in $D(C_e)$
  at a cost of $2^{j-1}|D(C_e)|$ or buys it at a cost of $2^{j-1}M$
  and so \[\OPT(T) = \sum_j \sum_{C \in C_j(T)} 2^{j-1} \cdot
  \min\{M, |D(C)|\}.\]

  For each rent terminal $t_i \in R_j$, we charge its cost share
  $2^{j+1}$ to the level-($j-2$) cut $C \in \C_{j-2}(T)$ containing
  $i$. Each level-$j$ cut $C \in \C_j(T)$ receives a charge of
  $2^{j+3} |R_{j+2} \cap C|$ and so
  \[\sum_j 2^{j+1} |R_j| = \sum_j 2^{j+3} \sum_{C \in \C_j(T)} |R_{j+2}
  \cap C|.\] 

  It remains to prove the following claim: for each level-$j$ cut $C
  \in \C_j(T)$, we have $|R_{j+2} \cap C| \leq \min\{M, |D(C)|\}$. 
  Since $C$ has diameter less than $2^j$ and $d(s_i,t_i) \geq 2^{j+2}$
  for each $t_i \in R_{j+2}$, we have $|R_{j+2} \cap C| \leq
  |D(C)|$. So now we prove that $|R_{j+2} \cap C| \leq M$. Suppose,
  towards a contradiction, that $|R_{j+2} \cap C| > M$ and let $t_i$
  be the last-arriving terminal of $R_{j+2} \cap C$.  The terminals of
  $R_{j+2} \cap C \setminus \{t_i\}$ arrive before $t_i$, are each of
  distance less than $2^j$ from $t_i$ (diameter of $C$ is less than
  $2^j$) and of the same class as $t_i$, so they are part of $t_i$'s
  witness set $W(t_i)$. Since $|R_{j+2} \cap C| > M$, we have that
  $|W(t_i)| \geq |R_{j+2} \cap C \setminus \{t_i\}| \geq M$. Thus,
  $t_i$ would have been a buy terminal but this contradicts the
  assumption that $t_i \in R_{j+2}$. Therefore, we have $|R_{j+2} \cap
  C| \leq M$ as desired. This completes the proof of the claim and
  so 
  $\sum_j 2^{j+1}|R_j| \leq 16\OPT(T)$.  \end{proof}

Now, Lemmas \ref{lem:MROB:share} and \ref{lem:MROB:C} imply that the
cost of Algorithm \ref{alg:MROB} is at most $O(1)\OPT(T)$ for any HST
embedding $T$ of $(X,d)$. Furthermore, by Corollary \ref{cor:main},
there exists a HST embedding $T^*$ such that $\OPT(T^*)
\leq O(\log k)\OPT$. Thus the algorithm is $O(\log k)$-competitive for
the online rent-or-buy problem and this proves Theorem \ref{thm:MROB}.









\section{Connected Facility Location}
\label{sec:CFL}
In this section, we consider the connected facility location problem
and prove Theorem \ref{thm:CFL}. We recall the problem statement. At
the beginning, the algorithm is given a parameter $M \geq 0$, a set of
\emph{facilities} $F \subseteq V$ and facility opening costs $f_x$ for
each facility $x \in F$. There is a designated root facility $r \in F$
with zero opening cost. The algorithm maintains a set of open
facilities $F'$ and a subgraph $H$ connecting $F'$ and $r$. When a
client $i$ arrives, the algorithm may open a new facility, and then
assigns $i$ to some open facility $\sigma(i) \in F'$. The cost of the
algorithm is $\sum_{x \in F'} f_x + Mc(H) + \sum_i d(i,
\sigma(i))$. We call $Mc(H)$ the \emph{Steiner cost} and
$d(i,\sigma(i))$ the \emph{assignment cost} of client $i$. The special
case when $M = 0$ (i.e. open facilities need not be connected) is
called the online facility location problem.

\paragraph{Intuition.} 
The connected facility location problem shares a similar cost structure
with single-source rent-or-buy. Here, the ``buy cost'' consists of the
facility opening cost and the Steiner cost, the ``rent cost'' is the
assignment cost. At a high-level, we use a similar strategy --- we
only open a new facility if we can pay for the facility opening cost
and the additional Steiner cost using the assignment costs of nearby
clients. In order to do this, we use an online facility location
algorithm together with an adaptation of our single-source rent-or-buy
algorithm. The former tells us if we can pay for the opening cost and
the latter tells us if we can pay for the additional Steiner cost
incurred in connecting the newly open facility to $r$.

In the analysis, we will not be able to show that the cost of the
algorithm is at most $O(1)\OPT(T)$ for any HST embedding $T$ of the
clients because there is a lower bound of $\Omega(\frac{\log
  k}{\log \log k})$ for the facility location problem even on HSTs
\cite{fotakis2008competitive}. In a sense, routing is easy on trees
but choosing a good set of facilities to open online remains
difficult.  Instead, we will use the fact that the connected facility
location instance induces a rent-or-buy instance and a facility
location instance on the same metric, and that there is a $O(\log
k)$-competitive facility location
algorithm~\cite{fotakis2007primal}\footnote{We can also apply our
  techniques to show that the algorithm of \cite{fotakis2007primal} is
  $O(\log k)$-competitive but we omit the discussion as the analysis
  differs greatly from the analyses in this paper.}. Our analysis
charges part of the cost to the cost of the facility location
algorithm and the rest to the optimal rent-or-buy solution.

\paragraph{Algorithm.}
Since the root facility has zero opening cost, our algorithm starts by
opening the root facility.  We run Fotakis's $O(\log k)$-competitive
algorithm~\cite{fotakis2007primal} in parallel; call it $\OFL$. We
denote its open facilities by $\vF$ and its assignments by $\vassign$,
and call them \emph{virtual facilities} and \emph{virtual
  assignments}, respectively. Algorithm \ref{alg:CFL} will only open a
facility if it was already opened by $\OFL$. i.e. its open facilities
$F'$ is a subset of the virtual facilities $\vF$. (We assume
w.l.o.g. that $\OFL$ opens the root facility as well.)

When a client $i$ arrives, let $x \in F'$ be the nearest open
facility. Define $a_i = d(i,x)$ and $\class(i) = j$ if $a_i \in [2^j,
2^{j+1})$. We first pass the client to $\OFL$, which may open a new
virtual facility and then it assigns $i$ to the nearest virtual
facility $\vassign(i)$. We will either assign $i$ to $\vassign(i)$ or
the nearest open facility $x$. (Note that $F' \subseteq \vF$ so $i$ is
closer to $\vassign(i)$ than $x$.) If $x$ is not much further than
$\vassign(i)$ --- in particular, if $d(i,x) \leq 4d(i,\vassign(i))$
--- then we assign $i$ to $x$ since we can later charge the assignment
cost $d(i,x)$ to $\OFL$'s assignment cost $d(i,\vassign(i))$. In this
case, we call $i$ a \emph{virtual client}. Otherwise, we consider
opening $\vassign(i)$ and assigning $i$ to it. We define the
\emph{witness set} $W(i)$ to be the class-$j$ clients that are of
distance at most $2^{j-2}$ from $i$. If there are at least $M$
witnesses, then we open $\vassign(i)$, assign $i$ to $\vassign(i)$ and
connect $\vassign(i)$ to the root via $x$ (i.e. we add the edge
$(\vassign(i), x)$ to $H$); we call $i$ a \emph{buy client} in this
case. Otherwise, then we simply assign $i$ to $x$
and call $i$ a \emph{rent client}.

In the description of Algorithm \ref{alg:CFL} below, we use
$Q_j,Z_j,R_j$ to keep track of the class-$j$ virtual clients, buy
clients and rent clients, resp.
\begin{algorithm}
  \caption{Algorithm for Online Connected Facility Location}
   \begin{algorithmic}[1]
     \label{alg:CFL}
    \STATE Initialize $F' \leftarrow \{r\}; H \leftarrow \emptyset;
    Q_j \leftarrow \emptyset; Z_j \leftarrow
    \emptyset; R_j \leftarrow \emptyset;$
    \WHILE {client $i$ arrives}
    \STATE Pass $i$ to $\OFL$ and update virtual solution $\vF,\vassign$
    \STATE Let $x \in F'$ be nearest open facility to $i$ and 
    set $j = \lfloor \log d(i,x)
    \rfloor$ 
    \STATE $W(i) \leftarrow \{i' \in R_j : d(i',i) <
    2^{j-2}\}$
    \IF {$a_i \leq 4d(i, \vassign(i))$}
    \STATE Assign $i$ to $x$ and add $i$ to $Q_j$
    \ELSIF {$|W(i)| \geq M$}  
    \STATE Open $\vassign(i)$ and add $\vassign(i)$ to $F'$
    \STATE Add edge $(\vassign(i),x)$ to $H$
    \STATE Assign $i$ to $\vassign(i)$ and add $i$ to $Z_j$
    \ELSE
    \STATE Assign $i$ to $x$ and add $i$ to $R_j$
    \ENDIF
    \ENDWHILE
  \end{algorithmic}
\end{algorithm}

\paragraph{Analysis.}
Let $X$ be the set of clients and the root, and $k$ be the number of clients. 
Consider a facility location instance over metric $(V,d)$ with the
same facilities $F$ and clients $X$ as well as a rent-or-buy instance
over metric $(V,d)$ with terminals $X$ and root $r$.  Let $\OPTFL$ and
$\OPTROB$ be the cost of the respective optimal solutions. A
feasible solution to the connected facility location instance is
feasible for both the facility location and rent-or-buy instances so we
have the following lemma.

\begin{lemma}
  \label{lem:CFL:sub}
  $\OPTFL \leq \OPT$ and $\OPTROB \leq \OPT$.
\end{lemma}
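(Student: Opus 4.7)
The plan is to show both inequalities by directly decomposing an optimal connected facility location (CFL) solution into feasible solutions for the two subproblems on the same metric. Let $\solF = (F^*, \sigma^*, H^*)$ denote an optimal CFL solution: $F^* \subseteq F$ is the set of open facilities (containing $r$), $\sigma^* : X \to F^*$ is the assignment, and $H^*$ is a Steiner tree connecting $F^*$ to $r$. Its cost is
\[
\OPT \;=\; \sum_{x \in F^*} f_x \;+\; Mc(H^*) \;+\; \sum_{i} d(i,\sigma^*(i)).
\]

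First I would handle $\OPTFL \leq \OPT$. The pair $(F^*, \sigma^*)$ is, by definition, a feasible solution to the facility location instance (facilities are opened from $F$ and every client is assigned to an open facility). Its facility-location cost equals $\sum_{x \in F^*} f_x + \sum_i d(i,\sigma^*(i))$, which is obtained from $\OPT$ by dropping the nonnegative Steiner term $Mc(H^*)$. Hence $\OPTFL \leq \sum_{x \in F^*} f_x + \sum_i d(i,\sigma^*(i)) \leq \OPT$.

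Next I would handle $\OPTROB \leq \OPT$. Define a rent-or-buy solution on $(V,d)$ with root $r$ by buying exactly the edges of $H^*$ and, for each terminal $i$, renting the single edge $Q_i = \{(i, \sigma^*(i))\}$. Feasibility is the main thing to check: since $\sigma^*(i) \in F^*$ and $H^*$ connects every facility in $F^*$ to $r$, the graph $H^* \cup Q_i$ contains a path from $i$ to $r$, as required by the rent-or-buy instance. The cost of this solution is $Mc(H^*) + \sum_i d(i,\sigma^*(i))$, which equals $\OPT$ minus the nonnegative facility opening cost $\sum_{x \in F^*} f_x$. Therefore $\OPTROB \leq \OPT$, completing both inequalities.

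There is no real obstacle here; the statement is essentially a structural observation that a CFL solution pays, separately, for a feasible FL solution (modulo the Steiner edges) and a feasible single-source rent-or-buy solution (modulo facility openings, treating each assignment edge as a rented edge). The only point to be careful about is that the rented edge $(i,\sigma^*(i))$ is an edge of the complete graph $(V,d)$, which is allowed since the rent-or-buy instance is defined on $(V,d)$ with terminals $X$ and any edge of the underlying graph may be rented.
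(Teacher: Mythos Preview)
Your proof is correct and takes essentially the same approach as the paper: the paper simply remarks that a feasible connected facility location solution is feasible for both the induced facility location and rent-or-buy instances, and you have spelled out this observation in full detail. There is nothing to add.
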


Let $Q,Z,R$ be the set of virtual, buy, and rent clients,
respectively. Note that a virtual or rent client $i \in Q \cup R$ has
assignment cost $d(i,\sigma(i)) = a_i$ and a buy client $i \in Z$ has
assignment cost $d(i,\sigma(i)) = d(i, \vassign(i))$. Thus the cost
of the algorithm is \[\sum_{x \in F'} f_x + Mc(H) + \sum_{i \in Q \cup
  R} a_i + \sum_{i \in Z} d(i,\vassign(i)).\] We first charge the
opening cost as well as the assignment cost of virtual and buy clients
to the cost of the virtual solution. Then we charge the Steiner cost
and the assignment cost of rent clients to $\OPTROB$.

\begin{lemma}
  \label{lem:CFL:FL}
  $\sum_{x \in F'} f_x + \sum_{i \in Q} a_i + \sum_{i \in Z} d(i,
  \vassign(i)) \leq O(\log k)\OPT$.
\end{lemma}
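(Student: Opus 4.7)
The plan is to charge all three terms on the left-hand side directly to the cost of the virtual online facility location solution maintained by $\OFL$, which is $O(\log k)$-competitive with respect to $\OPTFL$, and then invoke Lemma \ref{lem:CFL:sub} to bound $\OPTFL \leq \OPT$. Recall that
\[
c(\OFL) = \sum_{x \in \vF} f_x + \sum_i d(i,\vassign(i)) \leq O(\log k)\OPTFL \leq O(\log k)\OPT,
\]
where the first inequality follows from the $O(\log k)$-competitiveness of $\OFL$~\cite{fotakis2007primal}, and the second from Lemma \ref{lem:CFL:sub}.

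Next I would bound each of the three terms by a constant multiple of $c(\OFL)$. For the facility opening cost, I would use the key structural property that $F' \subseteq \vF$ (since the algorithm only opens facilities that $\OFL$ has already opened), which immediately gives $\sum_{x \in F'} f_x \leq \sum_{x \in \vF} f_x$. For the virtual clients, I would use the defining inequality $a_i \leq 4d(i,\vassign(i))$ to obtain $\sum_{i \in Q} a_i \leq 4\sum_{i \in Q} d(i,\vassign(i))$. For the buy clients, each term $d(i,\vassign(i))$ is by definition precisely the virtual assignment cost that $\OFL$ already pays for client $i$.

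Summing these bounds and noting that the virtual assignments of $Q$ and $Z$ are disjoint contributions to $\sum_i d(i,\vassign(i))$, I would get
\[
\sum_{x \in F'} f_x + \sum_{i \in Q} a_i + \sum_{i \in Z} d(i,\vassign(i)) \leq \sum_{x \in \vF} f_x + 4\sum_i d(i,\vassign(i)) \leq 4c(\OFL),
\]
which combined with the competitiveness of $\OFL$ yields the claimed $O(\log k)\OPT$ bound. There is no real obstacle here --- the lemma is essentially a bookkeeping exercise justifying the two algorithmic design choices (namely, opening only virtual facilities, and the threshold $a_i \leq 4d(i,\vassign(i))$ for declaring a client virtual). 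The real work lies in the companion lemma that charges the Steiner cost and the rent-client assignment cost against $\OPTROB$, which is where the rent-or-buy charging machinery developed in Sections \ref{sec:SROB} and \ref{sec:MROB} will need to be adapted.
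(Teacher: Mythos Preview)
Your proposal is correct and follows essentially the same argument as the paper: bound the facility cost via $F'\subseteq\vF$, bound virtual-client assignment cost via the defining threshold $a_i\le 4d(i,\vassign(i))$, leave the buy-client term as is, and then invoke the $O(\log k)$-competitiveness of $\OFL$ together with Lemma~\ref{lem:CFL:sub}. The resulting inequality $\sum_{x\in F'}f_x+\sum_{i\in Q}a_i+\sum_{i\in Z}d(i,\vassign(i))\le \sum_{x\in\vF}f_x+4\sum_i d(i,\vassign(i))$ is exactly what the paper obtains.
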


\begin{proof}
  Our algorithm only opens a facility if it was already opened by
  $\OFL$ so $\sum_{x \in F'} f_x \leq \sum_{x \in \vF}
  f_x$. Furthermore, the assignment cost of a virtual client $i \in Q$
  is $a_i \leq 4d(i,\vassign(i))$, and for each buy client $i \in Z$
  we have $d(i,\sigma(i)) = d(i,\vassign(i))$. Therefore, we have
  \[\sum_{x \in F'} f_x + \sum_{i \in Q}
  a_i + \sum_{i \in Z} d(i, \vassign(i)) \leq \sum_{x \in \vF} f_x +
  4\sum_i d(i, \vassign(i)).\]

  Since $\OFL$ is a $O(\log k)$-competitive algorithm for facility
  location, the cost of its virtual solution is $\sum_{x \in \vF} f_x
  + \sum_i d(i, \vassign(i)) \leq O(\log k)\OPTFL$. The lemma now
  follows from Lemma \ref{lem:CFL:sub}. 
\end{proof}



Next, we show that the Steiner cost and the assignment cost of rent
clients $Mc(H) + \sum_{i \in R} a_i$ is at most a constant times the
cost of the optimal rent-or-buy solution on any HST embedding of the
terminals. 
%
%
This part of the analysis is analogous to that in Section
\ref{sec:SROB}. 
For each rent client $i \in R_j$, we define $i$'s cost share to be
$2^{j+1}$. We now show that the Steiner cost and the assignment cost
of rent clients is at most thrice the total cost share $\sum_j
2^{j+1}|R_j|$.

We will need the following lemma which says that class-$j$ buy clients
are at least $2^{j-1}$-apart from each other.

\begin{lemma}
    \label{lem:CFL:sep}
    $d(z,z') \geq 2^{j-1}$ for $z,z' \in Z_j$.
\end{lemma}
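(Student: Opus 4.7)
The plan is to exploit the condition that made each $z \in Z_j$ a buy client rather than a virtual client. By the algorithm, when $z$ was processed, it satisfied $a_z > 4 d(z,\vassign(z))$, which together with $a_z < 2^{j+1}$ gives $d(z,\vassign(z)) < 2^{j-1}$. Moreover, when $z$ became a buy client, the algorithm added $\vassign(z)$ to $F'$, so $\vassign(z)$ is an open facility from that point onward.

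Assume without loss of generality that $z'$ arrives after $z$. When $z'$ arrives, $\vassign(z) \in F'$, so the distance $a_{z'}$ from $z'$ to its nearest open facility is at most $d(z',\vassign(z))$. Since $z' \in Z_j$, we have $a_{z'} \geq 2^j$, and therefore $d(z',\vassign(z)) \geq 2^j$. Combining this with $d(z,\vassign(z)) < 2^{j-1}$ via the triangle inequality yields
\[
d(z,z') \;\geq\; d(z',\vassign(z)) - d(z,\vassign(z)) \;>\; 2^j - 2^{j-1} \;=\; 2^{j-1},
\]
which is exactly the claim.

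The only subtle point, and the step that needs the buy-client branch of the algorithm rather than (say) the Steiner-tree-style Lemma~\ref{lem:r-sep}, is observing that after $z$ is processed the newly opened facility $\vassign(z)$ genuinely belongs to $F'$ and hence acts as a lower bound on $a_{z'}$. Everything else is a direct application of the triangle inequality. No induction or case analysis is required.
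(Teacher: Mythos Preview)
Your proof is correct and is essentially the same argument as the paper's: both use that the facility opened by the earlier buy client (your $\vassign(z)$, the paper's $\sigma(z')$, which coincide since buy clients are assigned to their virtual facility) is available to the later one, giving $d(\text{later},\text{facility}) \geq 2^j$, while the earlier client is within $2^{j-1}$ of that facility, and the triangle inequality finishes. The only cosmetic difference is which of $z,z'$ is taken to arrive first.
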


\begin{proof}
  By triangle inequality, we have $d(z,z') \geq d(z,\sigma(z')) -
  d(z', \sigma(z'))$.  Suppose $z$ arrived after $z'$. Since $a_z$ is
  defined to be the distance from $z$ to the nearest open facility
  when $z$ arrived and the facility $\sigma(z')$ that $z'$ was
  assigned to is open at this time, we have $d(z,\sigma(z')) \geq
  a_z$. Moreover, $z'$ is a buy client and so we assigned it to a
  facility $\sigma(z')$ such that $d(z',\sigma(z')) <
  \frac{1}{4}a_{z'}$. We now have that $d(z,z') \geq a_z -
  \frac{1}{4}a_{z'}$. Since $z, z'$ are of class $j$, we have $a_z
  \geq 2^j$ and $a_{z'} < 2^{j+1}$. Thus, $d(z,z') \geq 2^{j-1}$.
\end{proof}

\begin{lemma}
  \label{lem:CFL:buy}
  $Mc(H) + \sum_{i \in R} a_i \leq 3\sum_j 2^{j+1}|R_j|$.
\end{lemma}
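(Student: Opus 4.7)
The plan is to bound the two summands separately and combine. First, I would handle the assignment cost of rent clients: each rent client $i \in R_j$ has $a_i < 2^{j+1}$ by the definition of $\class$, so $\sum_{i \in R} a_i \leq \sum_j 2^{j+1}|R_j|$. That takes care of one of the three copies of $\sum_j 2^{j+1}|R_j|$ on the right-hand side.

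Next I would bound the Steiner cost $Mc(H)$. Each buy client $i \in Z$ contributes at most one edge $(\vassign(i), x)$ to $H$, where $x$ is the nearest open facility when $i$ arrived. By the triangle inequality, the length of this edge is at most $d(\vassign(i), i) + d(i, x)$. The test on line 6 fails for buy clients, so $d(i, \vassign(i)) < \frac{1}{4}a_i$, giving edge length less than $\frac{5}{4}a_i < \frac{5}{4}\cdot 2^{j+1}$ for $i \in Z_j$. Hence
\[Mc(H) \;\leq\; \frac{5M}{4}\sum_j 2^{j+1}|Z_j|.\]

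The main step is then to prove $M|Z_j|\leq |R_j|$ for each class $j$, mirroring the argument in Lemma~\ref{lem:SROB:share}. For $z\in Z_j$ the witness set $W(z)$ satisfies $|W(z)|\geq M$, $W(z)\subseteq R_j$, and every element is within distance $2^{j-2}$ of $z$. The key point is that the witness sets of distinct buy clients in $Z_j$ are pairwise disjoint: by Lemma~\ref{lem:CFL:sep}, $d(z,z')\geq 2^{j-1}$ for $z,z'\in Z_j$, whereas if $v \in W(z)\cap W(z')$ the triangle inequality would give $d(z,z') < 2^{j-2}+2^{j-2}=2^{j-1}$, a contradiction. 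Thus $M|Z_j|\leq \sum_{z\in Z_j}|W(z)|\leq |R_j|$.

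Plugging this in yields $Mc(H)\leq \frac{5}{4}\sum_j 2^{j+1}|R_j|$, and combining with the bound on the rent assignment cost gives a total of at most $\frac{9}{4}\sum_j 2^{j+1}|R_j|\leq 3\sum_j 2^{j+1}|R_j|$, as desired. The only step requiring new thought is the disjointness of the witness sets, which relies on Lemma~\ref{lem:CFL:sep}; everything else is a direct translation of the single-source rent-or-buy analysis, with the extra factor of $5/4$ accounting for the fact that a buy client's assignment travels to $\vassign(i)$ rather than to $x$ directly.
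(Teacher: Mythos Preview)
Your proof is correct and follows essentially the same approach as the paper: bound the rent assignment cost directly, bound each edge added to $H$ by a constant times $a_z$ via the triangle inequality and the fact that $d(z,\vassign(z))<\tfrac14 a_z$, and then show $M|Z_j|\le|R_j|$ using disjointness of the witness sets, which follows from Lemma~\ref{lem:CFL:sep}. The only difference is cosmetic: the paper uses the looser bound $d(\vassign(z),x)\le 2a_z$ (yielding the stated factor $3$), whereas you keep the sharper $\tfrac54 a_z$ and obtain $\tfrac94$ before rounding up.
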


\begin{proof}
  Since the cost share of a rent terminal $i \in R$ is at least $a_i$,
  it suffices to show that $Mc(H) \leq 2\sum_{i \in R}a_i$.  We will
  prove this using the following two claims.

\begin{claim}
  $c(H) \leq \sum_{z \in Z} 2a_z$.
\end{claim}

\begin{proof}
  We proceed by charging the increase in $c(H)$ due to an open
  facility to the buy client that opened it. For each open facility $y
  \in F'$, define $z(y) \in Z$ to be the buy client that opened it and
  $x(y) \in F'$ to be the open facility that $y$ was connected to when
  it was opened. So $c(H) = \sum_{y \in F'} d(y,x(y))$. We now show
  that $d(y,x(y)) \leq 2 a_{z(y)}$ for each $x \in F'$.

  Fix an open facility $y \in F'$. For brevity, we write $z$, and $x$,
  in place of $z(y)$, and $x(y)$, respectively. By triangle inequality,
  $d(y,x) \leq d(y,z) + d(z,x)$ so now we bound the right-hand side in
  terms of $a_z$. 
  By definition of the algorithm, at the time when $z$ just arrived,
  before it opened $y$, we have that $x$ was the nearest open facility
  and $y$ was the nearest virtual facility. So $a_z = d(z,x)$ and $y =
  \vassign(z)$. Furthermore, $d(z,\vassign(z)) < \frac{1}{4}a_z$ since
  $z$ is not a virtual client. Thus, $d(y,x) \leq d(y,z) + d(z,x) \leq
  2a_z$, as desired.

  Each open facility was opened by a distinct buy client so 
  $c(H) = \sum_{y \in F'} d(y,x(y)) \leq \sum_{z \in Z} 2a_z$.
\end{proof}

\begin{claim}
  $\sum_{z \in Z} Ma_z \leq \sum_j 2^{j+1}|R_j|$.
\end{claim}

\begin{proof}
  We have $\sum_{z \in Z}Ma_z \leq \sum_j 2^{j+1}M|Z_j|$.  We now
  prove that $M|Z_j| \leq |R_j|$ for each class $j$. The witness set
  $W(z)$ of $z \in Z_j$ satisfies the following properties: (1)
  $|W(z)| \geq M$; (2) $W(z) \subseteq R_j$; (3) $d(i,z) < 2^{j-2}$
  for $i \in W(z)$. The first implies that $M|Z_j| \leq \sum_{z \in
    Z_j} |W(z)|$. Lemma \ref{lem:CFL:sep} together with the second and
  third properties imply that the witness sets of $Z_j$ are disjoint
  subsets of $R_j$ and so $\sum_{z \in Z_j} |W(z)| \leq
  |R_j|$. These two inequalities imply that $\sum_{z \in
    Z}Ma_z \leq \sum_j 2^{j+1}|R_j|$.
\end{proof}

Combining these claims, we have $Mc(H) \leq \sum_{z \in Z}2Ma_z \leq
2\sum_j 2^{j+1}|R_j|$. Therefore, $Mc(H) + \sum_{i \in R} a_i \leq
3\sum_j 2^{j+1}|R_j|$. 
\end{proof}

Now we show that the total cost share is at most a constant times the
cost of the optimal rent-or-buy solution on any HST embedding of the
terminals.

\begin{lemma}
  \label{lem:CFL:sharetree}
  $\sum_j 2^{j+1}|R_j| \leq O(1) \OPTROB(T)$ for any HST embedding $T$
  of $(X,d)$.
\end{lemma}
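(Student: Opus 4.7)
The plan is to mirror the proof of Lemma \ref{lem:SROB:cost} from the single-source rent-or-buy analysis, exploiting the fact that the induced rent-or-buy instance has terminals $X$, root $r$, and the same metric $(X,d)$.

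First, I would lower bound $\OPTROB(T)$ in terms of the cuts of the hierarchical decomposition $\C(T)$. Fix a level-$j$ cut $C_e \in \C_j(T)$ with $r \notin C_e$. The edge $e$ has length $2^{j-1}$ and lies on the $T$-path from every terminal in $C_e$ to $r$, so any feasible rent-or-buy solution on $T$ must either buy $e$ (at cost $2^{j-1}M$) or rent $e$ for each terminal in $C_e$ (at cost $2^{j-1}|C_e|$). Summing over cuts yields
\[\OPTROB(T) \geq \sum_j \sum_{C \in \C_j(T) : r \notin C} 2^{j-1}\min\{M,|C|\}.\]

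Next, I would set up the charging scheme: each rent client $i \in R_j$ charges its cost share $2^{j+1}$ to the unique level-$(j-2)$ cut of $T$ containing it. The choice of $(j-2)$ is dictated by the witness radius $2^{j-2}$ used in Algorithm \ref{alg:CFL}, which matches the diameter bound of a level-$(j-2)$ cut. Reindexing by $\ell = j-2$, the total charge received by a level-$\ell$ cut $C \in \C_\ell(T)$ is $2^{\ell+3}|R_{\ell+2}\cap C|$, so
\[\sum_j 2^{j+1}|R_j| = \sum_\ell \sum_{C \in \C_\ell(T)} 2^{\ell+3}|R_{\ell+2}\cap C|.\]

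The main claim to prove is: for every level-$\ell$ cut $C \in \C_\ell(T)$, $|R_{\ell+2}\cap C| = 0$ if $r \in C$, and $|R_{\ell+2}\cap C| \leq \min\{M,|C|\}$ if $r \notin C$. The case $r \in C$ is easy: since $\diam(C) < 2^\ell$, any rent client $i \in C$ satisfies $a_i \leq d(i,r) < 2^\ell$ (as $r$ is an open facility), so $\class(i) < \ell$ and $i \notin R_{\ell+2}$. For $r \notin C$, the bound $|R_{\ell+2}\cap C| \leq |C|$ is immediate. The bound $|R_{\ell+2}\cap C| \leq M$ is where the witness-set argument kicks in: if more than $M$ clients of $R_{\ell+2}$ lay in $C$, take the last-arriving such client $i$; at that moment the other $R_{\ell+2}\cap C\setminus\{i\}$ clients have already been placed in $R_{\ell+2}$ and are within distance $< 2^\ell = 2^{(\ell+2)-2}$ of $i$, so they all lie in $W(i)$, forcing $|W(i)| \geq M$. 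But then $i$ would have been classified as either a virtual or buy client (not a rent one), contradicting $i \in R_{\ell+2}$. Combining the lower bound on $\OPTROB(T)$, the charging identity, and the claim yields $\sum_j 2^{j+1}|R_j| \leq 16\,\OPTROB(T)$.

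The only genuinely subtle step is the $|R_{\ell+2}\cap C| \leq M$ contradiction, and its cleanness depends on the algorithm's witness radius being exactly $2^{j-2}$ (rather than, say, $2^{j-1}$): this factor-of-$4$ slack in the witness definition is precisely what aligns diameters and witness radii and lets the argument go through. Everything else is bookkeeping analogous to the single-source case.
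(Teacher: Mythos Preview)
Your proposal is correct and follows exactly the approach the paper intends: it explicitly charges class-$j$ rent clients to the level-$(j-2)$ cuts and then defers to the argument of Lemma~\ref{lem:SROB:cost}, which is precisely what you have spelled out (including the correct handling of the virtual/buy/rent trichotomy in the contradiction step). The only difference is that you have written out the details the paper leaves implicit, and your constant $16$ matches the analogous Lemma~\ref{lem:MROB:C} computation.
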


\begin{proof} 
  Let $T$ be a HST embedding. 
  We charge the cost share of class-$j$ clients to the level-($j-2$)
  cuts. So overall, we have $\sum_j 2^{j+1}|R_j| = \sum_j \sum_{C \in
    \C_j(T)} 2^{j+3} |R_{j+2} \cap C|$. 
  The rest of the proof proceeds as in the proof of Lemma
  \ref{lem:SROB:cost}.
\end{proof}

Now we have all the required ingredients to bound the competitive
ratio of Algorithm \ref{alg:CFL}. By Corollary \ref{cor:main} and Lemma
\ref{lem:CFL:sub}, there exists a HST embedding $T^*$ such that
$\OPTROB(T^*) \leq O(\log k)\OPTROB \leq O(\log k)\OPT$. So Lemmas
\ref{lem:CFL:buy} and \ref{lem:CFL:sharetree} imply that the Steiner
cost and the assignment cost of rent clients is $Mc(H) + \sum_{i \in
  R} a_i \leq O(\log k)\OPT$. Lemma \ref{lem:CFL:FL} says that the
remainder of the algorithm's cost is at most $O(\log k)\OPT$
as well. Thus it is $O(\log k)$-competitive for connected facility
location and this proves Theorem \ref{thm:CFL}.




\section{Prize-Collecting Steiner Tree}
\label{sec:PCST}

In this section, we give a simple algorithm and analysis for the prize-collecting Steiner tree problem and prove the following theorem. 
\begin{theorem}[\cite{Qian:2011wg}]
  \label{thm:PCST}
  There is a deterministic $O(\log k)$-competitive algorithm for the
  online prize-collecting Steiner tree problem.
\end{theorem}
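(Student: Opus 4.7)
The plan is to adapt Algorithm~\ref{alg:SROB} to the prize-collecting setting. The key observation is that PCST has essentially the same cost structure as single-source rent-or-buy: connecting terminal $i$ to the current tree at cost $a_i$ (the distance from $i$ to the nearest connected vertex, with the root always available) plays the role of ``buying'', while paying penalty $\pi_i$ plays the role of ``renting'', except that these costs are specified per-terminal rather than tied by a global multiplier $M$. When terminal $i$ arrives I would set $a_i$ and $\class(i) = j = \lfloor \log a_i \rfloor$ as in Section~\ref{sec:SROB}, and let $W(i) = \{i' \in P_j : d(i,i') < 2^{j-1}\}$ be the class-$j$ penalty-paid witnesses of $i$, where $P_j$ collects the class-$j$ terminals for which the algorithm has so far paid the penalty. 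The algorithm connects $i$ (adding $(i,z)$ to $H$) if either $\pi_i \geq a_i$ --- a ``cheap'' connection --- or $\sum_{i' \in W(i)} \pi_{i'} \geq a_i$ --- a ``converted'' connection; otherwise it pays penalty $\pi_i$ and adds $i$ to $P_j$.

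The analysis mirrors Section~\ref{sec:SROB}. Since connected terminals are always linked to the nearest previously connected vertex, the subsequence of connected terminals is the trace of a greedy Steiner-tree run, so Lemma~\ref{lem:r-sep} gives $d(i,i') \geq 2^j$ for any two class-$j$ connected terminals. In particular, the witness sets of distinct class-$j$ converted terminals are disjoint subsets of $P_j$, and the decision rule yields $\sum_{i \in Z^{\mathrm{conv}}} a_i \leq \sum_{i \in P} \pi_i$; hence $\ALG = c(H) + \sum_{i \in P} \pi_i \leq \sum_{i \in Z^{\mathrm{cheap}}} a_i + 2 \sum_{i \in P} \pi_i$. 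Next I would define cost shares $s_i = a_i$ for cheap and $s_i = \pi_i$ for penalty terminals and charge each class-$j$ such terminal to the level-$(j-1)$ cut $C \in \C_{j-1}(T)$ of an arbitrary HST embedding $T$ containing it. Lemma~\ref{lem:r-sep} permits at most one cheap class-$j$ terminal per such $C$ (contributing at most $2^{j+1}$), and since $C$ has diameter below $2^{j-1}$, the same amortized argument as in Lemma~\ref{lem:SROB:cost} applied to the last-arriving penalty terminal of $P_j \cap C$ yields $\sum_{i \in P_j \cap C} \pi_i < 2^{j+2}$. Thus each level-$(j-1)$ cut receives charge $O(2^j)$.

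The main obstacle is that, unlike rent-or-buy, penalties in PCST are paid once per terminal rather than replicated along each root-path edge, so the natural per-cut bound $\min\{2^{j-2}, \sum_{i \in C} \pi_i\}$ cannot be summed without double-counting. I would handle this with a case split per charged cut $C$ at level $j-1$ with $r \notin C$: in Case~A, $\OPT(T)$ connects some terminal of $C$, so the tree edge $e_C$ is bought and contributes $2^{j-2}$ to $\OPT(T)$, absorbing the $O(2^j)$ charge from $C$ up to a constant; in Case~B, $\OPT(T)$ connects no terminal of $C$, so every terminal of $C$ is in $\OPT(T)$'s penalty set, and since $a_i \leq \pi_i$ for cheap and $s_i = \pi_i$ for penalty terminals, the cut's charge is bounded by the total penalty paid by $\OPT(T)$ on the class-$j$ terminals inside $C$. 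Because the algorithm charges each terminal to exactly one cut, the edges and penalties used on the $\OPT(T)$ side are disjoint across cuts, giving $\ALG \leq O(1)\OPT(T)$ for every HST embedding $T$ of the terminals. Corollary~\ref{cor:main} then produces a $T^*$ with $\OPT(T^*) \leq O(\log k)\OPT$, establishing the $O(\log k)$ competitive ratio of Theorem~\ref{thm:PCST}.
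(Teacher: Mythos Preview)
Your proposal is correct and follows essentially the same strategy as the paper: adapt the single-source rent-or-buy algorithm, use witness sets to decide between connecting and paying the penalty, charge connection cost to witnesses via disjointness (Lemma~\ref{lem:r-sep}), and then charge the remaining cost shares to cuts of an arbitrary HST embedding while handling the fact that penalties are not replicated along root paths by associating each terminal's penalty to a single cut.

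The differences are minor and presentational. The paper's algorithm maintains a continuous cost share $\rho_i\in[0,\pi_i]$ for every terminal and raises $\rho_i$ until either $\sum_{i'\in W(i)}\rho_{i'}\ge 2^{j+1}$ (connect) or $\rho_i=\pi_i$ (pay penalty); your cheap/converted/penalty trichotomy is a discretized version of the same rule, with $\pi_{i'}$ playing the role of $\rho_{i'}$. In the analysis, the paper writes a single lower bound $\OPT(T)\ge\sum_{j}\sum_{C\in\C_j:r\notin C}\min\{2^{j-1},\sum_{i\in R_{j+1}\cap C}\pi_i\}$ and bounds the per-cut charge $\sum_{i\in R_{j+1}\cap C}\rho_i$ against this $\min$, whereas you make the same accounting explicit via your Case~A/Case~B split on whether $\OPT(T)$ buys the edge $e_C$. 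Both arguments rely on the same two facts: each HST edge is used once on the $\OPT(T)$ side, and each terminal is charged to a unique cut so its penalty is counted once. Your route is slightly longer but arguably more transparent about where the non-replicated-penalty issue is resolved; the paper's cost-share formulation absorbs the cheap case automatically and yields marginally cleaner constants.
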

We recall the problem statement. The algorithm is given a root terminal $r$ initially and maintains a subgraph $H$ online. At each online step, a terminal $i$ with \emph{penalty} $\pi_i$ arrive and the algorithm can either pay the penalty or augment $H$ such that $H$ connects $i$ to the root. We say that $i$ is a \emph{penalty terminal} if the algorithm chose to pay $i$'s penalty and denote by $P$ the set of penalty terminals. The total cost of the algorithm is $c(H) + \sum_{i \in P} \pi_i$.

\paragraph{Algorithm.}
Algorithm \ref{alg:prize} maintains a set of \emph{buy terminals} $Z$ ($Z$ includes $r$) which it connects to the root. It also associates to each terminal $i$ a cost share $\rho_i$. When a terminal $i$ with penalty $\pi_i$ arrives, let $z \in Z$ be the closest buy terminal. We define $a_i = d(i,z)$, and define $\class(i) = j$ if $a_i \in [2^j, 2^{j+1})$. The algorithm initializes $i$'s cost share $\rho_i$ to $0$ and raises $\rho_i$ until either the total cost share of class-$j$ terminals within a radius of $2^{j-1}$ of $i$ (including $i$) is at least $2^{j+1}$ (sufficient to pay for the edge $(i,z)$) or $\rho_i = \pi_i$. In the first case, we add the edge $(i,z)$ to $H$; in the second case, we pay the penalty. We denote by $X_j$ the set of class-$j$ terminals and $W(i)$ the set of class-$j$ terminals within a radius of $2^{j-1}$ of $i$ (we call $W(i)$ the \emph{witness set} of $i$).

\begin{algorithm}
  \caption{Algorithm for Online Prize-Collecting Steiner Tree}
  \begin{algorithmic}[1]
    \label{alg:prize}
    \STATE $Z \leftarrow \{r\}; H \leftarrow \emptyset$; $X_j \leftarrow \emptyset$ for all $j$
    \WHILE {terminal $i$ with penalty $\pi_i$ arrives}
    \STATE Let $z$ be closest terminal in $Z$ to $i$, 
    set $j = \lfloor \log d(i,z) \rfloor$ and add $i$ to $X_j$
    \STATE $W(i) \leftarrow \{i' \in X_j : d(i,i') < 2^{j-1}\}$
    \STATE Initialize $\rho_i \leftarrow 0$ and increase $\rho_i$ until $\sum_{i' \in W(i)} \rho_{i'} \geq 2^{j+1}$ or $\rho_i = \pi_i$
    \IF {$\sum_{i' \in W(i)} \rho_{i'} \geq 2^{j+1}$}
    \STATE Add $i$ to $Z$ and buy $(i,z)$, i.e. $H \leftarrow H \cup \{(i,z)\}$
    \ELSE
    \STATE Pay penalty $\pi_i$
    \ENDIF
    \ENDWHILE
  \end{algorithmic}
\end{algorithm}




\paragraph{Analysis.}
Let $X$ be the set of terminals. For each buy terminal $z \in Z$, the algorithm incurs a cost of $a_z$. Thus, the total cost of the algorithm is $\sum_{z \in Z} a_z + \sum_{i \in P} \pi_i$. Our goal in the analysis is to show that this is at most $\OPT(T)$ for any HST embedding of $(X,d)$.

First, we show that the cost of the algorithm is at most twice the total cost share $\sum_i \rho_i$.  \begin{lemma}
  \label{lem:prize:cost}
  We have that $\sum_{z \in Z} a_z + \sum_{i \in P} \pi_i \leq 2\sum_i \rho_i$.
\end{lemma}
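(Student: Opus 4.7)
The plan is to split the cost of the algorithm into its penalty part and its buy part, and bound each separately by $\sum_i \rho_i$.

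First, I would handle the penalty terminals. By construction of the algorithm, whenever $i$ is a penalty terminal, the inner loop terminates because $\rho_i$ reaches $\pi_i$, so $\rho_i = \pi_i$ for every $i \in P$. Summing, $\sum_{i \in P} \pi_i = \sum_{i \in P} \rho_i \leq \sum_i \rho_i$.

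Second, I would handle the buy terminals. For each buy terminal $z \in Z \setminus \{r\}$ of class $j$, the termination condition for the inner loop that opened $z$ as a buy terminal gives $\sum_{i' \in W(z)} \rho_{i'} \geq 2^{j+1} > a_z$. Summing,
\begin{equation*}
\sum_{z \in Z \setminus \{r\}} a_z \leq \sum_{z \in Z \setminus \{r\}} \sum_{i' \in W(z)} \rho_{i'}.
\end{equation*}
The main step is showing that the witness sets $\{W(z)\}_{z \in Z}$ are pairwise disjoint, so that the right-hand side above is at most $\sum_i \rho_i$. Suppose $i' \in W(z_1) \cap W(z_2)$ for distinct $z_1, z_2 \in Z$. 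Since membership in $W(z)$ requires $\class(i') = \class(z)$, we must have $\class(z_1) = \class(z_2) = j$ for some $j$, and by the triangle inequality $d(z_1,z_2) < 2^{j-1} + 2^{j-1} = 2^j$. On the other hand, $H$ is exactly the output of the greedy online Steiner tree algorithm run on the subsequence of buy terminals $Z$, and for each $z \in Z \setminus \{r\}$ the value $a_z$ is precisely the distance from $z$ to its nearest previously-arrived buy terminal. Thus Lemma \ref{lem:r-sep} applies to the restricted sequence $Z$ and yields $d(z_1,z_2) \geq 2^j$, a contradiction. Hence the witness sets are disjoint and $\sum_{z \in Z \setminus \{r\}} a_z \leq \sum_i \rho_i$.

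Finally, since the root $r$ contributes $a_r = 0$ (it is in $Z$ from the start), combining the two bounds gives $\sum_{z \in Z} a_z + \sum_{i \in P} \pi_i \leq 2 \sum_i \rho_i$. The only subtle point is the witness-disjointness argument, which is the natural analogue of the charging step used in the single-source rent-or-buy analysis and hinges on invoking Lemma \ref{lem:r-sep} for the subsequence of buy terminals.
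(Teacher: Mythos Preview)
Your proposal is correct and follows essentially the same approach as the paper: bound the penalty part by $\sum_i \rho_i$ using $\rho_i = \pi_i$ for $i \in P$, bound the buy part by showing the witness sets of buy terminals are disjoint via Lemma \ref{lem:r-sep} applied to the subsequence $Z$, and combine. The only cosmetic difference is that the paper groups buy terminals by class first (showing the witness sets within each $Z_j$ are disjoint subsets of $X_j$), whereas you argue global disjointness directly by observing that a shared witness forces equal classes; both arguments are equivalent.
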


\begin{proof}
  The algorithm pays the penalty $\pi_i$ for terminal $i$ only if $\rho_i = \pi_i$ so $\sum_{i \in P} \pi_i \leq \sum_i \rho_i$. Let $Z_j \subseteq Z$ be the set of class-$j$ buy terminals. Since $a_z \in [2^j, 2^{j+1})$ for $z \in Z_j$, we have $\sum_{z \in Z} a_z \leq \sum_j 2^{j+1} |Z_j|$.  We now show that $2^{j+1}|Z_j| \leq
  \sum_{i \in X_j} \rho_i$ for each class $j$.
  
  Fix a class $j$. The witness set $W(z)$ of $z \in Z_j$ satisfies the
  following properties: (1) $\sum_{i \in W(z)} \rho_i \geq 2^{j+1}$; (2) $W(z) \subseteq X_j$; (3)
  $d(i,z) < 2^{j-1}$ for $i \in W(z)$. The first implies that
  $2^{j+1}|Z_j| \leq \sum_{z \in Z_j} \sum_{i \in W(z)} \rho_i$. We claim that $d(z,z') \geq
  2^j$ for $z,z'\in Z_j$. This completes the proof since together with
  the second and third properties, we have that the witness sets of
  $Z_j$ are disjoint subsets of $X_j$ and so $\sum_{z \in Z_j} \sum_{i \in W(z)} \rho_i
  \leq \sum_{i \in X_j} \rho_i$.

  Now we prove the claim. Observe that $H$ is the
  subgraph produced by the online greedy Steiner tree algorithm if it
  were run on the subsequence of buy terminals $Z$, and for each $z \in
  Z$ we have that $a_z$ is exactly the distance from $z$ to the
  nearest previously-arrived buy terminal. Thus, we can apply Lemma
  \ref{lem:r-sep} and get that $d(z,z') \geq 2^j$ for any $z,z' \in
  Z_j$, proving the claim. Putting all of the above together, we get
  $\sum_{z \in Z} a_z \leq \sum_j \sum_{i \in X_j} \rho_i = \sum_i \rho_i$. This finishes the
  proof of the lemma.
\end{proof}


Next, we show that the total cost share is at most a constant times the cost of the optimal solution on any HST embedding.

\begin{lemma}
  \label{lem:prize:C}
  $\sum_i \rho_i \leq 8\OPT(T)$ for any HST embedding $T$ of $(X,d)$.
\end{lemma}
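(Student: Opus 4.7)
The plan is to decompose $\OPT(T)$ into its penalty payments and edge-cost payments, and to charge $\sum_i \rho_i$ against each part separately. Let $P^*(T) \subseteq X$ be the set of terminals for which the optimal solution on $T$ pays the penalty. For any level-$j$ cut $C_e \in \C_j(T)$ with $r \notin C_e$ that contains at least one terminal not in $P^*(T)$, the edge $e$ of length $2^{j-1}$ must be bought by the optimal solution on $T$, so
\[
  \OPT(T) \;=\; \sum_{i \in P^*(T)} \pi_i \;+\; \sum_{j}\ \sum_{\substack{C \in \C_j(T) \\ r \notin C,\ C \setminus P^*(T) \neq \emptyset}} 2^{j-1}.
\]

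Split $\sum_i \rho_i = \sum_{i \in P^*(T)} \rho_i + \sum_{i \notin P^*(T)} \rho_i$. Since Algorithm \ref{alg:prize} never raises $\rho_i$ past $\pi_i$, the first sum is bounded by $\sum_{i \in P^*(T)} \pi_i$, which is precisely the penalty part of $\OPT(T)$. For the second sum, I will charge each $\rho_i$ with $i \in X_j \setminus P^*(T)$ to the unique level-$(j-1)$ cut $C \in \C_{j-1}(T)$ containing $i$ (extending $T$ by an extra level of leaves if $j = 0$, exactly as in the footnote of Lemma \ref{lem:SROB:cost}). Because $r \in Z$ we have $a_i \leq d(i,r)$, and $a_i \geq 2^j$ together with $\diam(C) < 2^{j-1}$ forces $r \notin C$; moreover $i \in C \setminus P^*(T)$ means the cut $C$ contributes its edge of length $2^{j-2}$ to $\OPT(T)$.

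The heart of the argument, and the main obstacle, is to show that the total charge received by each level-$(j-1)$ cut $C$ is at most $2^{j+1}$; this is exactly the factor-$8$ slack needed against the optimal's edge of length $2^{j-2}$. The key observation is that $\diam(C) < 2^{j-1}$ implies the whole set $X_j \cap C$ lies inside the witness set $W(i)$ of every $i \in X_j \cap C$. Order the terminals of $X_j \cap C$ by arrival as $i_1, \ldots, i_m$. By the same argument as in Lemma \ref{lem:SROB:cost}, at most one of them becomes a buy terminal, and if so it must be $i_m$: once a class-$j$ buy terminal exists in $C$, every later terminal in $C$ is within $\diam(C) < 2^{j-1}$ of it, so its $a$-value drops below $2^{j-1}$ and its class below $j$. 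Whether $i_m$ is a buy or a penalty terminal, the algorithm's stopping rule gives $\sum_{i' \in W(i_m)} \rho_{i'} \leq 2^{j+1}$ after $i_m$ is processed: raising stops exactly at the threshold in the buy case, and the sum stays strictly below $2^{j+1}$ in the penalty case. Since $X_j \cap C \subseteq W(i_m)$, this yields $\sum_{\ell=1}^{m} \rho_{i_\ell} \leq 2^{j+1}$.

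Summing this bound over the level-$(j-1)$ cuts that receive any charge and reindexing gives $\sum_{i \notin P^*(T)} \rho_i \leq 8 \cdot (\text{edge-cost part of }\OPT(T))$, and combining with $\sum_{i \in P^*(T)} \rho_i \leq (\text{penalty part})$ yields $\sum_i \rho_i \leq 8\OPT(T)$, as claimed.
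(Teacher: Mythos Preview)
Your organization differs slightly from the paper's---the paper lower-bounds $\OPT(T)$ cut by cut via $\min\{\sum_{i\in R_{j+1}\cap C}\pi_i,\,2^{j-1}\}$ rather than splitting $\sum_i\rho_i$ according to the optimal penalty set $P^*(T)$---but the substantive step, namely bounding $\sum_{i\in X_j\cap C}\rho_i$ by the class-$j$ threshold $2^{j+1}$, is the same in both.

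There is one small gap in your argument for that bound. The intermediate claim that $\sum_{i'\in W(i_m)}\rho_{i'}\le 2^{j+1}$ after $i_m$ is processed can fail in the buy case: $W(i_m)$ is a ball of radius $2^{j-1}$, which may extend beyond $C$, and class-$j$ terminals lying in $W(i_m)\setminus C$ can already push $\sum_{i'\in W(i_m)\setminus\{i_m\}}\rho_{i'}$ strictly above $2^{j+1}$ before $i_m$ arrives. In that event $\rho_{i_m}=0$ and $i_m$ is a buy terminal, yet the sum over $W(i_m)$ exceeds the threshold, so ``raising stops exactly at the threshold'' is not the right description. Your desired conclusion $\sum_{i\in X_j\cap C}\rho_i\le 2^{j+1}$ is still correct, however: when $\rho_{i_m}=0$, apply your penalty-case reasoning to $i_{m-1}$ instead (being a non-last class-$j$ terminal in $C$ it must be a penalty terminal), obtaining $\sum_{\ell<m}\rho_{i_\ell}\le\sum_{i'\in W(i_{m-1})}\rho_{i'}<2^{j+1}$, and add $\rho_{i_m}=0$. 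The paper sidesteps this by working with the last terminal $i^*$ in the cut satisfying $\rho_{i^*}>0$ rather than the last class-$j$ terminal outright.
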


\begin{proof}
  Let $T$ be a HST embedding of $(X,d)$. First, we lower bound
  $\OPT(T)$ in terms of $T$'s cuts. Define $R_j \subseteq X_j$ to be the subset of class-$j$ terminals $i$ with $\rho_i > 0$. Consider a level-$j$ cut $C_e \in \C_j(T)$ and associate the terminals $R_{j+1} \cap C_e$ to $C_e$. (Note that a terminal can only be associated to one cut.) By
  definition, $e \in E_j(T)$ and has length $2^{j-1}$. If $r \notin C_e$ then $e$ lies on the $(i,r)$ path in
  $T$ for each associated terminal $i \in R_{j+1} \cap C_e$ so the optimal solution on $T$ either pays the penalty for
  each associated terminal at a cost of $\sum_{i \in R_{j+1} \cap C_e} \pi_i$ or buys $e$ at a cost of $2^{j-1}$. Since each terminal is associated to a unique cut, we have \[\OPT(T) \geq
  \sum_j \sum_{C \in C_j(T) : r \notin C} \min\left\{\sum_{i \in R_{j+1} \cap C}
  \pi_i, 2^{j-1}\right\}.\]

We have $\sum_i \rho_i = \sum_j \sum_{i \in R_j} \rho_i$. For each terminal $i \in R_j$, we charge $\rho_i$ to the level-($j-1$) cut $C \in \C_{j-1}$ containing $i$. Each level-$j$ cut $C \in \C_j(T)$ is charged $\sum_{i \in R_{j+1} \cap C} \rho_i$. Thus, we have \[\sum_i \rho_i = \sum_j \sum_{C \in \C_j(T)} \left(\sum_{i \in R_{j+1} \cap C} \rho_i\right).\]

Since $\rho_i \leq \pi_i$ for each terminal $i$, it suffices to prove the following claim: for each level-$j$ cut $C \in \C_j(T)$, we have $\sum_{i \in R_{j+1} \cap C} \rho_i = 0$ if $r \in C$, and $\sum_{i \in R_{j+1} \cap C} \rho_i \leq 2^{j+2}$ if $r \notin C$. Suppose $r \in C$. Since $a_i < d(i,r) < 2^j$ for each $i \in C$, there cannot be any class-($j+1$) terminal in $C$ and so $\sum_{i \in R_{j+1} \cap C} \rho_i = 0$. Now consider the case $r \notin C$. Suppose, towards a contradiction, that $\sum_{i \in R_{j+1} \cap C} \rho_i > 2^{j+2}$. Let $i^*$ be the last-arriving terminal of $R_{j+1} \cap C$, i.e. $i^*$ is the last-arriving class-$(j+1)$ terminal of $C$ with $\rho_{i^*} > 0$. Since the diameter of $C$ is less than $2^j$, we have $W(i^*) \supseteq R_{j+1} \cap C$.  We also have $\sum_{i \in R_{j+1} \cap C \setminus \{i^*\}} \rho_{i} < 2^{j+2}$ because otherwise the algorithm would not have increased $\rho_{i^*}$. But the algorithm increased $\rho_{i^*}$ ensuring that $\sum_{i \in W(i^*)} \rho_{i} \leq 2^{j+1}$, contradicting the assumption that $\sum_{i \in R_{j+1} \cap C} \rho_i > 2^{j+1}$. This completes the proof of the claim and so we get $\sum_i \rho_i \leq 8\OPT(T)$.  \end{proof}

Now, Lemmas \ref{lem:prize:cost} and \ref{lem:prize:C} imply that the cost of Algorithm \ref{alg:prize} is at most $O(1)\OPT(T)$ for any HST embedding $T$ of $(X,d)$. Furthermore, by Theorem \ref{thm:FRT}, there exists a HST embedding $T^*$ such that $\OPT(T^*) \leq O(\log k)\OPT$. 
Thus, the algorithm is $O(\log k)$-competitive for prize-collecting Steiner tree and this proves Theorem \ref{thm:PCST}.


\section*{Acknowledgements} The author is very grateful to Shuchi
Chawla and Mohit Singh for helpful discussions and to the anonymous referees for their useful comments.




\bibliographystyle{plain}
\bibliography{bibliography}

\appendix


\end{document}